\documentclass[a4paper,12pt]{amsart}

\usepackage{amsmath}
\usepackage{amsfonts}
\usepackage{amssymb}
\usepackage{amsthm}

\pdfoutput=1

\DeclareMathOperator{\Pert}{Pert}

\def\A{\mathcal{A}}
\def\B{\mathcal{B}}
\def\C{\mathbb{C}}
\def\H{\mathbb{H}}
\def\R{\mathbb{R}}
\def\U{\mathcal{U}}


\title[Perturbation semigroup of matrix algebras]{Perturbation semigroup \\of matrix algebras}
\author{Niels Neumann and Walter D. van Suijlekom}
\address{Institute for Mathematics, Astrophysics and Particle Physics, 
Faculty of Science, Radboud University Nijmegen, Heyendaalseweg 135, 6525 AJ Nijmegen, The Netherlands}
\email{niels.neumann@student.ru.nl, waltervs@math.ru.nl}
\date{22 October 2014}

\theoremstyle{plain}		\newtheorem{theorem}{Theorem}[section]
\theoremstyle{definition}	\newtheorem{definition}[theorem]{Definition}
\theoremstyle{plain} 		\newtheorem{lemma}[theorem]{Lemma}

\newtheorem{proposition}[theorem]{Proposition}
\newtheorem{corl}[theorem]{Corollary}

\newtheorem{corollary}[theorem]{Corollary}					
\newtheorem{remark}[theorem]{Remark}

\begin{document}

\begin{abstract}
In this article we analyze the structure of the semigroup of inner perturbations in noncommutative geometry. This perturbation semigroup is associated to a unital associative $*$-algebra and extends the group of unitary elements of this $*$-algebra. We compute the perturbation semigroup for all matrix algebras. 
\end{abstract}

\maketitle

\section{Introduction}
Recently, a semigroup structure has been introduced \cite{CCS13} in the context of noncommutative geometry \cite{C94}. This {\em perturbation semigroup} is associated to a (unital associative) $*$-algebra $\A$, and implements the inner perturbations \cite{C96} of the metric ---described in terms of a `Dirac operator' $D$ acting on a Hilbert space $\mathcal H$--- in a spectral triple $(\A,\mathcal H,D)$ ({\em cf.} \cite{C95}). Moreover, the perturbation semigroup allows for a description of such fluctuations when the so-called first-order condition is not satisfied. The physical applications requiring such an extension were subsequently discussed in \cite{CCS13b}. A crucial role in these applications is played by finite spectral triples, that is, spectral triples for which $\A$ and $H$ are finite-dimensional (and accordingly, $D$ is a hermitian matrix). It is the subject of this paper to determine the perturbation semigroup for all such finite-dimensional $*$-algebras $\A$. Since $\A$ is faithfully represented on $\mathcal H$, this amounts to considering only matrix algebras. In other words, we consider the $*$-algebra of block-diagonal matrices of the form
\begin{equation}
\label{eq:matrix-alg}
\A = \bigoplus_{i=1}^N M_{n_i} (\mathbb F_i) ,
\end{equation}
where $n_1,\ldots, n_N$ are the fixed dimensions of the block-matrices and $\mathbb F_i= \C,\R$ or $\H$ (which may vary with $i$). We stress that the algebra $\A$ is a complex $*$-algebra only if all $\mathbb F_i = \C$, otherwise we consider it as a real $*$-algebra.   

In Section \ref{sect:pert} we introduce and analyze the general structure of the perturbation semigroup $\Pert(\A)$ associated to a $*$-algebra and show how it extends the group $\U(\A)$ of unitary elements in $\A$. We then show how $\Pert(\A \oplus \mathcal B)$ is related to $\Pert(\A)$ and $\Pert(\mathcal B)$. This allows for a determination of the perturbation semigroup of all matrix algebras by the computation of $\Pert(M_N(\mathbb F))$ for $\mathbb F = \C,\R$ or $\H$ in Section \ref{sect:pert-matrix}. In all these examples we identify the map from the group of unitaries in $\A$ to $\Pert(\A)$, and relate it to the representation theory of $U(N)= \U(M_N(\C))$, $O(N)=\U(M_N(\R))$ and $Sp(N)=\U(M_N(\H))$ in the respective cases. 

\subsection*{Acknowledgement}
The first author is supported by the Radboud Honours Academy of the Faculty of Science. The second author thanks the Hausdorff Institute for Mathematics in Bonn for hospitality and support.

\section{The perturbation semigroup}
\label{sect:pert}
Throughout this paper, we let $\A$ be an associative unital $*$-algebra, referring to it simply as a $*$-algebra. We allow both complex and real $*$-algebras, {\it i.e.} the base field is either $\C$ or $\R$. This is important when we consider tensor products: $\A \otimes \B$ will then denote either $\A \otimes_\C \B$ or $\A \otimes_\R \B$, depending on whether $\A,\B$ are considered as complex or real algebras. 

Associated to any $*$-algebra, we can define a group as follows. 
\begin{definition}
The {\em group of unitary elements} in a $*$-algebra $\A$ will be denoted by $\mathcal{U}(\mathcal{A})$, {\it i.e.}
$$\mathcal{U}(\mathcal{A}) = \lbrace u \in \mathcal{A} \mid uu^* = 1 = u^*u\rbrace.$$
\end{definition}

The unitary group $\U(\A)$ plays the role of a gauge group in noncommutative geometry and its applications to particle physics \cite{C96}. In fact, if $\A$ is represented on a Hilbert space $\mathcal H$, then the unitary group is represented on $\mathcal H$ by, indeed, unitary operators. Moreover, any self-adjoint operator $D$ on $\mathcal H$ can be transformed into a unitarily equivalent operator, via
$$
D \mapsto u D u^*,
$$
which can be rewritten as
$$
D \mapsto u D u^* = D + u [D, u^*].
$$
We interpret this as a perturbation of $D$ by the unitary element $u \in \U(\A)$. A more general class of perturbations associated to $\A$ is given by the perturbation semigroup \cite{CCS13} that we now define. First, we recall the definition of the opposite algebra. 

\begin{definition}
Let $\mathcal{A}$ be an algebra, then the {\em opposite algebra} of $\mathcal{A}$ is denoted by $\mathcal{A}^{\circ}$ and is given by $\mathcal{A}$ as a vector space but with opposite product $a^\circ b^\circ = (ba)^\circ$ for $a,b\in \A$. 
\end{definition}
\begin{definition}
The {\em perturbation semigroup} is given by
$$\Pert(\mathcal{A}) = \left\lbrace \sum_j a_j \otimes b^{\circ}_j \in \mathcal{A} \otimes \mathcal{A}^{\circ} \left|\begin{array}{ll} \sum a_jb_j = 1\\ \sum a_j \otimes b^{\circ}_j = \sum b^*_j \otimes a^{*\circ}_j \end{array}\right. \right\rbrace,$$
where the sums are finite and $1$ is the unit in $\mathcal{A}$. 
\end{definition}
We will refer to the two conditions on the sums in $\Pert(\A)$ as {\em normalization condition} and {\em self-adjointness condition}, respectively.

\begin{proposition}
$\Pert(\mathcal{A})$ is a semigroup and has a unit. 
\end{proposition}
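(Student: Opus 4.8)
The plan is to equip $\A \otimes \A^\circ$ with its natural algebra multiplication and show that this multiplication restricts to $\Pert(\A)$, with the obvious candidate $1 \otimes 1^\circ$ serving as the unit. Recall that the product on $\A \otimes \A^\circ$ is
$$
\Bigl(\sum_s x_s \otimes y_s^\circ\Bigr)\Bigl(\sum_j a_j \otimes b_j^\circ\Bigr) = \sum_{s,j} x_s a_j \otimes (y_s b_j)^\circ = \sum_{s,j} x_s a_j \otimes (b_j y_s)^\circ,
$$
where I must be careful that the product in the second factor is the opposite one, so $y_s^\circ b_j^\circ = (b_j y_s)^\circ$.

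First I would check that the normalization condition is preserved: if $\sum_j a_j b_j = 1$ and $\sum_s x_s y_s = 1$, then applying the multiplication map $m \colon \A \otimes \A^\circ \to \A$, $a \otimes b^\circ \mapsto ab$, to the product gives $\sum_{s,j} x_s a_j b_j y_s = \sum_s x_s \bigl(\sum_j a_j b_j\bigr) y_s = \sum_s x_s y_s = 1$. Next I would verify the self-adjointness condition is preserved: the $*$-operation on $\A \otimes \A^\circ$ sends $\sum a_j \otimes b_j^\circ$ to $\sum b_j^* \otimes a_j^{*\circ}$, and one checks directly that it is an anti-homomorphism for the above product, i.e. $(PQ)^* = Q^* P^*$; combined with $P^* = P$ and $Q^* = Q$ for $P,Q \in \Pert(\A)$, this gives $(QP)^* = P^* Q^* = PQ$, but one actually wants $(PQ)^* = PQ$ — so I should instead note that $\Pert(\A)$ is closed under the product written in the order that makes self-adjointness work, or equivalently observe that since both factors are self-adjoint the product's self-adjointness follows from a short computation with the opposite-algebra signs. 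Associativity is inherited from $\A \otimes \A^\circ$, so nothing to prove there. Finally, $1 \otimes 1^\circ$ clearly satisfies both conditions (it maps to $1$ under $m$, and $(1 \otimes 1^\circ)^* = 1^* \otimes 1^{*\circ} = 1 \otimes 1^\circ$) and is a two-sided identity for the multiplication.

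I expect the main obstacle to be bookkeeping with the opposite-algebra involution: one must keep straight that $*$ on $\A \otimes \A^\circ$ is $\sum a_j \otimes b_j^\circ \mapsto \sum b_j^* \otimes a_j^{*\circ}$ (not the factorwise involution), and that this swap interacts with the order-reversal in $\A^\circ$ in just the right way so that the self-adjointness condition survives multiplication. Everything else — normalization, associativity, existence of the unit — is routine once the product is written out correctly.
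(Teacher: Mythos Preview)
Your approach is the same as the paper's: inherit associativity from $\A\otimes\A^\circ$, check closure under both defining conditions, and exhibit $1\otimes 1^\circ$ as the unit. The normalization check and the identification of the unit are fine.

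Where you stumble is the self-adjointness step. You claim the involution $\sum a_j\otimes b_j^\circ\mapsto\sum b_j^*\otimes a_j^{*\circ}$ is an anti-homomorphism, obtain $(PQ)^*=Q^*P^*=QP$, notice this is not what you want, and then retreat to ``a short computation with the opposite-algebra signs'' without actually doing it. In fact this involution is a \emph{homomorphism}, not an anti-homomorphism: for $P=a\otimes b^\circ$ and $Q=c\otimes d^\circ$ one has $PQ=ac\otimes(db)^\circ$, hence
\[
(PQ)^*=(db)^*\otimes(ac)^{*\circ}=b^*d^*\otimes(c^*a^*)^\circ,
\]
while
\[
P^*Q^*=(b^*\otimes a^{*\circ})(d^*\otimes c^{*\circ})=b^*d^*\otimes(c^*a^*)^\circ,
\]
so $(PQ)^*=P^*Q^*$. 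The swap of tensor factors in the involution combines with the order-reversal in $\A^\circ$ to undo the usual anti-multiplicativity of $*$. With this in hand, $P^*=P$ and $Q^*=Q$ give $(PQ)^*=PQ$ immediately, which is exactly the computation the paper carries out entry by entry. So your instinct that the obstacle is bookkeeping was right; you just had the sign of that bookkeeping wrong.
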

\begin{proof}
Multiplication in $\Pert(\mathcal{A})$ is associative because $\A \otimes \A^\circ$ is an associative algebra. We show that the multiplication is closed, {\em i.e.}~that the product of two elements is again in the perturbation semigroup. For $\sum_j a_j \otimes \widetilde{a}_j^{\circ}, \sum_k b_k \otimes \widetilde{b}_k^{\circ} \in \Pert(\mathcal{A})$, we have
$$
\big(\sum_j a_j \otimes \widetilde{a}_j^{\circ}\big)\big(\sum_k b_k\otimes\widetilde{b}_k^{\circ}\big) 
 = \sum_{j,k} a_jb_k \otimes (\widetilde{b}_k\widetilde{a}_j)^{\circ}.
$$
That this element is both normalized and self-adjoint follows from a simple computation. It is normalized:
$$
\sum_{j,k} (a_jb_k)(\widetilde{b}_k\widetilde{a}_j)  =  \sum_{j,k} a_j (b_k\widetilde{b}_k) \widetilde{a}_j
  =\sum_j a_j (\sum_k b_k \widetilde{b}_k ) \widetilde{a}_j
 =  \sum_j a_j \widetilde{a}_j
 =  1,
$$
because both $\sum_k b_k \otimes \widetilde{b}_k^{\circ}$ and $\sum_j a_j \otimes \widetilde{a}_j^{\circ}$ are normalized, and it is self-adjoint:
\begin{multline*}
\sum_{j,k} (\widetilde{b}_k\widetilde{a}_j)^* \otimes (a_jb_k)^{*\circ}  =  \sum_{j,k} \widetilde{a}_j^* \widetilde{b}_k^* \otimes a_j^{*\circ}b_k^{*\circ}
 =  \big(\sum_j \widetilde{a}_j^* \otimes a_j^{*\circ} \big)\big(\sum_k \widetilde{b}_k^* \otimes b_k^{*\circ} \big)\\
 =  \big(\sum_j a_j \otimes \widetilde{a}_j^{\circ} \big)\big(\sum_k b_k \otimes \widetilde{b}_k^{\circ} \big)
 =  \sum_{j,k} (a_jb_k) \otimes (\widetilde{b}_k\widetilde{a}_j)^{\circ}
\end{multline*}
because both $\sum_k b_k \otimes \widetilde{b}_k^{\circ}$ and $\sum_j a_j \otimes \widetilde{a}_j^{\circ}$ are self-adjoint. The unit is given by $1 \otimes 1^\circ$, where $1$ is the unit in $\mathcal{A}$.
\end{proof}
The name perturbation semigroup is motivated by the following action of $\Pert(\A)$ on self-adjoint operators on $H$:
$$
D \mapsto \sum_j a_j D b_j =  D + \sum_j a_j [D,b_j],
$$
where $\sum_j a_j \otimes b_j^\circ \in \Pert(\A)$. This generalizes the action of $\U(\A)$ on $D$ that we just discussed. In fact, we have the following result.
\begin{proposition}
\label{prop:UA-Pert}
Let $\mathcal{A}$ be a $*$-algebra, then we have
\begin{align}
\label{eq:unit}
\mathcal{U}(\mathcal{A}) & \rightarrow \Pert(\mathcal{A}) \\
u & \mapsto u \otimes u^{*\circ}. \nonumber
\end{align}
\end{proposition}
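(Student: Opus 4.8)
The plan is to check that $u \otimes u^{*\circ}$ satisfies the two defining conditions of $\Pert(\A)$ for every $u \in \U(\A)$, so that the map \eqref{eq:unit} is well defined, and then to observe that it respects multiplication and units.

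First I would treat $u \otimes u^{*\circ}$ as a one-term sum, with $a_1 = u$ and $b_1 = u^*$. The normalization condition $\sum_j a_j b_j = 1$ then reads $u u^* = 1$, which is part of the definition of a unitary element. The self-adjointness condition $\sum_j a_j \otimes b_j^{\circ} = \sum_j b_j^* \otimes a_j^{*\circ}$ becomes $u \otimes u^{*\circ} = (u^*)^* \otimes u^{*\circ} = u \otimes u^{*\circ}$, which is trivially true. Hence $u \otimes u^{*\circ} \in \Pert(\A)$ and the assignment \eqref{eq:unit} makes sense.

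Next I would verify that \eqref{eq:unit} is a homomorphism of monoids. The identity $1 \in \U(\A)$ is sent to $1 \otimes 1^{\circ}$, which is the unit of $\Pert(\A)$ exhibited in the preceding proposition. For $u, v \in \U(\A)$, using the product in $\A \otimes \A^{\circ}$ together with the opposite-algebra rule $a^{\circ} b^{\circ} = (ba)^{\circ}$, one gets $(u \otimes u^{*\circ})(v \otimes v^{*\circ}) = uv \otimes (v^* u^*)^{\circ} = uv \otimes ((uv)^*)^{\circ}$, which is precisely the image of the unitary $uv$. Thus \eqref{eq:unit} is a monoid homomorphism from $\U(\A)$ into $\Pert(\A)$.

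There is no real obstacle here: the computation is bookkeeping, and the only subtlety is the order reversal forced by the opposite product, namely that $u^{*\circ} v^{*\circ} = (v^* u^*)^{\circ}$, which is exactly what lets $(uv)^* = v^* u^*$ fit the pattern $uv \otimes ((uv)^*)^{\circ}$. I would also flag that \eqref{eq:unit} need not be injective: over $\C$, replacing $u$ by $\lambda u$ with $|\lambda| = 1$ gives $(\lambda u) \otimes (\lambda u)^{*\circ} = \lambda \overline{\lambda}\,(u \otimes u^{*\circ}) = u \otimes u^{*\circ}$, so unitaries differing by a central phase have the same image; pinning down the precise kernel, and its link with representation theory, is best left to the concrete matrix-algebra computations later in the paper.
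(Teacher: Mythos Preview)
Your proof is correct. In fact, the paper does not supply a proof for this proposition at all; it is stated as self-evident, so your argument simply spells out the easy verification that the authors leave to the reader, together with the additional observation that the map is a monoid homomorphism with nontrivial kernel.
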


We end this section by determining the perturbation semigroup of the direct sum of $*$-algebras. 
\begin{proposition}
\label{prop:direct-sum}
Let $\mathcal{A},\mathcal{B}$ be $*$-algebras, then
\begin{equation}
\label{eq:directesom}
\Pert(\mathcal{A}\oplus\mathcal{B}) \cong \Pert(\mathcal{A}) \times \Pert(\mathcal{B}) \times (\mathcal{A}\otimes \mathcal{B}^{\circ}\oplus \mathcal{B}\otimes \mathcal{A}^{\circ})^{{\rm s.a.}}
\end{equation}
where ${\rm s.a.}$ stands for self-adjoint elements, {\it i.e.} those of the form $\sum a_i \otimes b_i^{\circ} + b_i^* \otimes a_i^{*\circ}$.
\end{proposition}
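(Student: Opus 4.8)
The plan is to decompose the big tensor algebra along the direct-sum structure and then read off the two defining conditions of $\Pert$ summand by summand.

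First I would use $(\mathcal{A}\oplus\mathcal{B})^{\circ}=\mathcal{A}^{\circ}\oplus\mathcal{B}^{\circ}$ together with distributivity of $\otimes$ over finite direct sums to write, as a (real or complex, as appropriate) algebra,
$(\mathcal{A}\oplus\mathcal{B})\otimes(\mathcal{A}\oplus\mathcal{B})^{\circ}=(\mathcal{A}\otimes\mathcal{A}^{\circ})\oplus(\mathcal{A}\otimes\mathcal{B}^{\circ})\oplus(\mathcal{B}\otimes\mathcal{A}^{\circ})\oplus(\mathcal{B}\otimes\mathcal{B}^{\circ})$.
Writing a general element as $x+y+z+w$ with $x\in\mathcal{A}\otimes\mathcal{A}^{\circ}$, $y\in\mathcal{A}\otimes\mathcal{B}^{\circ}$, $z\in\mathcal{B}\otimes\mathcal{A}^{\circ}$, $w\in\mathcal{B}\otimes\mathcal{B}^{\circ}$, I would observe that in a product $(a\otimes b^{\circ})(a'\otimes b'^{\circ})=aa'\otimes(b'b)^{\circ}$ the first legs multiply in $\mathcal{A}\oplus\mathcal{B}$ and the second in $(\mathcal{A}\oplus\mathcal{B})^{\circ}$, where all cross products vanish; hence the multiplication is diagonal, $(x+y+z+w)(x'+y'+z'+w')=xx'+yy'+zz'+ww'$, and in particular $\mathcal{A}\otimes\mathcal{B}^{\circ}$ and $\mathcal{B}\otimes\mathcal{A}^{\circ}$ are mutually annihilating non-unital subalgebras.

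Next I would translate the two conditions. The multiplication map $\sum a\otimes b^{\circ}\mapsto\sum ab$ sends $x+y+z+w$ to $\mu_{\mathcal{A}}(x)+\mu_{\mathcal{B}}(w)$, since the $y$ and $z$ pieces are killed ($\mathcal{A}\cdot\mathcal{B}=0$ in $\mathcal{A}\oplus\mathcal{B}$); so the normalization condition is equivalent to $\mu_{\mathcal{A}}(x)=1_{\mathcal{A}}$ and $\mu_{\mathcal{B}}(w)=1_{\mathcal{B}}$. For self-adjointness, the operation $\sum a\otimes b^{\circ}\mapsto\sum b^{*}\otimes a^{*\circ}$ preserves $\mathcal{A}\otimes\mathcal{A}^{\circ}$ and $\mathcal{B}\otimes\mathcal{B}^{\circ}$ (inducing there precisely the operation in the definition of $\Pert(\mathcal{A})$, resp.\ $\Pert(\mathcal{B})$) and interchanges $\mathcal{A}\otimes\mathcal{B}^{\circ}$ with $\mathcal{B}\otimes\mathcal{A}^{\circ}$. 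Therefore $x+y+z+w$ is self-adjoint iff $x$ and $w$ are self-adjoint and $z$ is the image of $y$ under this swap; equivalently $x\in\Pert(\mathcal{A})$, $w\in\Pert(\mathcal{B})$, and $y+z$ runs exactly over the self-adjoint elements $\sum a_i\otimes b_i^{\circ}+\sum b_i^{*}\otimes a_i^{*\circ}$ of $\mathcal{A}\otimes\mathcal{B}^{\circ}\oplus\mathcal{B}\otimes\mathcal{A}^{\circ}$, with $y\in\mathcal{A}\otimes\mathcal{B}^{\circ}$ otherwise unconstrained.

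Finally I would define the candidate isomorphism $\Pert(\mathcal{A}\oplus\mathcal{B})\to\Pert(\mathcal{A})\times\Pert(\mathcal{B})\times(\mathcal{A}\otimes\mathcal{B}^{\circ}\oplus\mathcal{B}\otimes\mathcal{A}^{\circ})^{{\rm s.a.}}$ by $x+y+z+w\mapsto(x,w,y+z)$, with inverse obtained by adding $x$, $w$ and a self-adjoint $y+y^{*}$ inside $(\mathcal{A}\oplus\mathcal{B})\otimes(\mathcal{A}\oplus\mathcal{B})^{\circ}$ via the obvious inclusions; normalization and self-adjointness of the reassembled element are immediate from the previous step, so the map is a bijection. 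For the semigroup structure I would use the diagonal product: the third component of $(x+y+z+w)(x'+y'+z'+w')$ is $yy'+zz'=yy'+y^{*}y'^{*}=yy'+(yy')^{*}$, which matches the product of $y+y^{*}$ and $y'+y'^{*}$ inside $\mathcal{A}\otimes\mathcal{B}^{\circ}\oplus\mathcal{B}\otimes\mathcal{A}^{\circ}$; hence the map is an isomorphism of semigroups (in fact of monoids, the unit $1\otimes 1^{\circ}$ mapping to $(1\otimes 1^{\circ},\,1\otimes 1^{\circ},\,1_{\mathcal{A}}\otimes 1_{\mathcal{B}}^{\circ}+1_{\mathcal{B}}\otimes 1_{\mathcal{A}}^{\circ})$). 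The bookkeeping is routine; the single delicate point is the action of the self-adjointness operation on the four-fold decomposition — it swaps the two tensor legs, so it is \emph{not} the naive tensor involution — and, relatedly, the identity $y^{*}y'^{*}=(yy')^{*}$ (a short computation with the opposite product) that makes $(\mathcal{A}\otimes\mathcal{B}^{\circ}\oplus\mathcal{B}\otimes\mathcal{A}^{\circ})^{{\rm s.a.}}$ closed under multiplication; that is where an order could go wrong, so I would verify it explicitly.
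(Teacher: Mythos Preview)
Your argument is correct and follows exactly the line of the paper's own proof: decompose $(\mathcal{A}\oplus\mathcal{B})\otimes(\mathcal{A}\oplus\mathcal{B})^{\circ}$ into the four summands and read off the normalization and self-adjointness conditions summand by summand. You supply considerably more detail than the paper (in particular the explicit verification that the bijection respects the semigroup law via $y^{*}y'^{*}=(yy')^{*}$); the phrase ``non-unital subalgebras'' is a slip --- each of $\mathcal{A}\otimes\mathcal{B}^{\circ}$ and $\mathcal{B}\otimes\mathcal{A}^{\circ}$ is in fact unital, as your own identification of the image of $1\otimes 1^{\circ}$ shows --- but this is harmless for the argument.
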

\begin{proof}
We start with the following isomorphism of $*$-algebras:
$$
(\mathcal{A}\oplus \mathcal{B})\oplus (\mathcal{A}\oplus \mathcal{B})^{\circ} \cong   \mathcal{A}\otimes \mathcal{A}^{\circ} \oplus \mathcal{B}\otimes \mathcal{B}^{\circ} \oplus \mathcal{A} \otimes \mathcal{B}^{\circ} \oplus \mathcal{B} \otimes \mathcal{A}^{\circ}.
$$
Imposing the normalization and self-adjointness condition to obtain $\Pert(\A \oplus \mathcal B)$ on the left-hand side translates on the right-hand side to give $\Pert(\mathcal{A}) \times \Pert(\mathcal{B}) \times (\mathcal{A}\otimes \mathcal{B}^{\circ}\oplus \mathcal{B}\otimes \mathcal{A}^{\circ})^{{\rm s.a.}}$. Indeed, normalization only affects the first two terms $\mathcal{A}\otimes \mathcal{A}^{\circ} \oplus \mathcal{B}\otimes \mathcal{B}^{\circ}$ where, together with the self-adjointness condition it gives rise to $\Pert(\A) \times \Pert(\mathcal B)$. The self-adjointness condition on $\mathcal{A} \otimes \mathcal{B}^{\circ} \oplus \mathcal{B} \otimes \mathcal{A}^{\circ}$ gives rise to elements of the form stated above.
\end{proof}

\section{Perturbation semigroup for matrix algebras}
\label{sect:pert-matrix}
In this section, we will derive the structure of the perturbation semigroup for all matrix algebras of the form \eqref{eq:matrix-alg}. In view of Proposition \ref{prop:direct-sum} it is enough to compute $\Pert(M_N(\mathbb F))$ for $\mathbb F = \C,\R,\H$. However, let us start with the following basic example.

\subsection{Perturbation semigroup $\Pert(\mathbb{C}^N)$}
For $\A=\C^N$ we have $\mathcal{A}\cong \mathcal{A}^{\circ}$ 
and $\A \otimes \A^\circ \cong \mathbb{C}^{N^2}$. As a basis for $\mathcal{A}$ we take the standard basis $\lbrace e_i \rbrace_{i=1}^N$. 
\begin{proposition}
\label{prop:pert-CN}
For any $N \geq 1$ we have
$$\Pert(\mathbb{C}^N) \cong \mathbb{C}^{N(N-1)/2}$$
with the semigroup structure given by componentwise multiplication. 
\end{proposition}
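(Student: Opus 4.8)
The plan is to do everything concretely in the basis $\{e_i \otimes e_j^{\circ}\}_{i,j=1}^N$ of $\A \otimes \A^{\circ} \cong \mathbb{C}^{N^2}$, and to translate the normalization and self-adjointness conditions into conditions on coefficients. So I would write a general element as $\xi = \sum_{i,j} c_{ij}\, e_i \otimes e_j^{\circ}$ with $c_{ij} \in \mathbb{C}$. Since the standard basis vectors are orthogonal idempotents, $e_i e_j = \delta_{ij} e_i$ and $\sum_i e_i = 1$, the normalization condition $\sum_{i,j} c_{ij}\, e_i e_j = 1$ becomes $\sum_i c_{ii}\, e_i = \sum_i e_i$, i.e.\ $c_{ii} = 1$ for every $i$.

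Next I would compute the involution on $\A \otimes \A^{\circ}$ in this basis. For $\A = \mathbb{C}^N$ the standard idempotents satisfy $e_i^* = e_i$, so $(e_i \otimes e_j^{\circ})^* = e_j^* \otimes e_i^{*\circ} = e_j \otimes e_i^{\circ}$; as $\A$ is here a complex $*$-algebra the involution is conjugate-linear, hence $\xi^* = \sum_{i,j} \overline{c_{ij}}\, e_j \otimes e_i^{\circ} = \sum_{i,j} \overline{c_{ji}}\, e_i \otimes e_j^{\circ}$. The self-adjointness condition $\xi = \xi^*$ therefore reads $c_{ij} = \overline{c_{ji}}$ for all $i,j$, which is automatically consistent with $c_{ii} = 1$. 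Consequently an element of $\Pert(\mathbb{C}^N)$ amounts precisely to a free choice of the strictly upper-triangular coefficients $(c_{ij})_{i<j} \in \mathbb{C}^{N(N-1)/2}$, with the remaining entries determined by $c_{ii} = 1$ and $c_{ji} = \overline{c_{ij}}$; this already yields the bijection of sets and the dimension count $N(N-1)/2$.

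It then remains to identify the semigroup law. From $(e_i \otimes e_j^{\circ})(e_k \otimes e_l^{\circ}) = e_i e_k \otimes (e_l e_j)^{\circ} = \delta_{ik}\delta_{jl}\, e_i \otimes e_j^{\circ}$, multiplying $\xi = \sum c_{ij}\, e_i \otimes e_j^{\circ}$ by $\eta = \sum d_{ij}\, e_i \otimes e_j^{\circ}$ gives $\sum_{i,j} c_{ij} d_{ij}\, e_i \otimes e_j^{\circ}$, so the product is componentwise in the coefficient matrices. One checks the constraints survive it: $1 \cdot 1 = 1$ on the diagonal, and $\overline{c_{ji}}\,\overline{d_{ji}} = \overline{c_{ji} d_{ji}}$ off it; and the unit $1 \otimes 1^{\circ}$ has all $c_{ij} = 1$. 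Reading off the free coordinates $(c_{ij})_{i<j}$ thus gives an isomorphism of semigroups from $\Pert(\mathbb{C}^N)$ onto $\mathbb{C}^{N(N-1)/2}$ with componentwise multiplication, as claimed.

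I do not expect a genuine obstacle; the computation is elementary. The two points deserving a moment of care are that the involution on the complex algebra $\A \otimes \A^{\circ}$ is conjugate-linear, so the self-adjointness constraint is the conjugate-symmetry $c_{ij} = \overline{c_{ji}}$ rather than $c_{ij} = c_{ji}$, and the small verification that this off-diagonal constraint, together with the diagonal constraint $c_{ii} = 1$, is preserved by the componentwise product, so that the coordinatewise map is indeed a homomorphism of semigroups.
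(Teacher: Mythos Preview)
Your proposal is correct and follows essentially the same approach as the paper: write an element in the basis $\{e_i\otimes e_j^\circ\}$, reduce normalization to $c_{ii}=1$ and self-adjointness to $c_{ij}=\overline{c_{ji}}$, and count the remaining $N(N-1)/2$ free complex parameters. You are in fact slightly more explicit than the paper in verifying that the product in $\A\otimes\A^\circ$ is componentwise on coefficients and preserves the constraints, which the paper leaves implicit.
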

\begin{proof}
In terms of the above standard basis we write an arbitrary element in $\Pert(\C^N)$ as $\sum C_{ij} e_i \otimes e_j^{\circ}$. The normalization condition states that $C_{ii}=1$ for all $i$: indeed 
$$\sum_{i,j} C_{ij}e_ie_j = \sum_i C_{ii} e_i,$$
which should be equal to $\sum_i e_i =1$, the unit in $\mathbb{C}^N$. 
The self-adjointness condition states that $C_{ij} = \overline{C_{ji}}$ for all $i,j$, as follows from
\begin{align*}
\sum_{i,j} C_{ij} e_i\otimes e_j^{\circ}  = \sum_{i,j} C_{ij}^* e_j^*\otimes e_i^{*\circ}
= \sum_{i,j} \overline{C_{ij}} e_j \otimes e_i 
= \sum_{i,j} \overline{C_{ji}} e_i \otimes e_j^{\circ}.
\end{align*}
In other words, among the $N^2$ variables $C_{ij}$, $N$ are equal to one, while the others are pairwise conjugated. 
\end{proof}
Note that this is compatible with Proposition \ref{prop:direct-sum}. In fact, with $\C^N \cong \C^{N-1} \oplus \C$ and the fact that $\Pert(\C) = \{ 1\}$ we find that
$$
\Pert(\C^N) \cong \Pert(\C^{N-1}) \times \C^{N-1}
$$
thus giving a different proof of Proposition \ref{prop:pert-CN}.


We end this subsection by considering the map from the unitary group to the perturbation semigroup of $\C^N$, leaving its elementary proof to the reader. 
\begin{proposition}
The map $\mathcal{U}(\mathbb{C}^N) \rightarrow \Pert(\mathbb{C}^N)$ is given explicitly by 
$$
(\lambda_1, \hdots, \lambda_N) \mapsto 1 + \sum_{i  \neq j} \lambda_i \overline{\lambda_j} e_i \otimes e_j^\circ.
$$

\end{proposition}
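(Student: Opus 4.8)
The plan is to start from the general formula of Proposition~\ref{prop:UA-Pert}, which sends a unitary $u\in\U(\A)$ to $u\otimes u^{*\circ}\in\Pert(\A)$, and simply specialize it to $\A=\C^N$ using the standard basis $\{e_i\}_{i=1}^N$. A unitary element of $\C^N$ is a tuple $(\lambda_1,\dots,\lambda_N)$ with $|\lambda_i|=1$ for all $i$; as an element of the algebra it is $u=\sum_i\lambda_i e_i$, and its adjoint is $u^*=\sum_j\overline{\lambda_j}e_j$.

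First I would expand the image $u\otimes u^{*\circ}$ in the basis: writing $u\otimes u^{*\circ}=\big(\sum_i\lambda_i e_i\big)\otimes\big(\sum_j\overline{\lambda_j}e_j\big)^{\circ}=\sum_{i,j}\lambda_i\overline{\lambda_j}\,e_i\otimes e_j^{\circ}$. Next I would split the double sum into its diagonal and off-diagonal parts. On the diagonal $i=j$ one has $\lambda_i\overline{\lambda_i}=|\lambda_i|^2=1$ since $u$ is unitary, so $\sum_i\lambda_i\overline{\lambda_i}\,e_i\otimes e_i^{\circ}=\sum_i e_i\otimes e_i^{\circ}$, which is precisely the unit $1$ of $\Pert(\C^N)$ (equivalently, $C_{ii}=1$ in the notation of the proof of Proposition~\ref{prop:pert-CN}). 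The remaining terms give exactly $\sum_{i\neq j}\lambda_i\overline{\lambda_j}\,e_i\otimes e_j^{\circ}$, yielding the claimed formula $(\lambda_1,\dots,\lambda_N)\mapsto 1+\sum_{i\neq j}\lambda_i\overline{\lambda_j}\,e_i\otimes e_j^{\circ}$.

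Finally, as a consistency check one can verify directly that the right-hand side does lie in $\Pert(\C^N)$: the coefficient of $e_i\otimes e_i^{\circ}$ is $1$ (normalization), and the coefficient $C_{ij}=\lambda_i\overline{\lambda_j}$ satisfies $C_{ij}=\overline{C_{ji}}$ (self-adjointness), matching the two conditions identified in Proposition~\ref{prop:pert-CN}. There is no real obstacle here: the only thing to be careful about is bookkeeping the opposite-algebra superscripts and remembering that unitarity of $(\lambda_i)$ means $|\lambda_i|=1$ for each $i$ (not merely a global phase condition), which is what collapses the diagonal to the unit element. This is the elementary computation the proposition invites the reader to carry out.
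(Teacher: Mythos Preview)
Your argument is correct and is precisely the elementary computation the paper has in mind: the paper explicitly ``leav[es] its elementary proof to the reader,'' so there is no alternative approach to compare against. Expanding $u\otimes u^{*\circ}$ in the standard basis and using $|\lambda_i|^2=1$ on the diagonal is all that is required.

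One small terminological caveat: when you write that $\sum_i e_i\otimes e_i^{\circ}$ is ``precisely the unit $1$ of $\Pert(\C^N)$,'' that is not quite accurate. The unit of the monoid $\Pert(\A)$ is $1_\A\otimes 1_\A^{\circ}$, which for $\A=\C^N$ is $\big(\sum_i e_i\big)\otimes\big(\sum_j e_j\big)^{\circ}=\sum_{i,j}e_i\otimes e_j^{\circ}$, not $\sum_i e_i\otimes e_i^{\circ}$. The symbol ``$1$'' in the proposition's displayed formula is thus a mild abuse of notation on the paper's part, standing for the diagonal piece $\sum_i e_i\otimes e_i^{\circ}$ (whose coefficients are forced to equal $1$ by normalization), rather than the monoid identity. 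Your computation is unaffected by this; just avoid conflating the two in the write-up.
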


\subsection{Perturbation semigroup of complex matrix algebras}
We determine the perturbation semigroup of $M_N(\mathbb{C})$. 

\begin{lemma}
\label{lma:matrix-opposite}
We have the following identification
\begin{align*}
M_N(\mathbb{C})^{\circ} & \rightarrow M_N(\mathbb{C})\\
A^{\circ} & \mapsto A^{\sf T}.
\end{align*}
where ${\sf T}$ denotes matrix transposition. Consequently, 
$$
M_N(\C) \otimes M_N(\C)^\circ \cong M_{N^2}(\C),
$$
as $*$-algebras.
\end{lemma}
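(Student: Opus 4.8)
The plan is to check by hand that $A^\circ \mapsto A^{\sf T}$ is an isomorphism of $*$-algebras, and then to feed this into the classical isomorphism $M_N(\C) \otimes_\C M_N(\C) \cong M_{N^2}(\C)$.

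First I would observe that transposition is a linear involution of $M_N(\C)$, so the proposed map is automatically a linear bijection; only multiplicativity and $*$-compatibility remain to be verified. Multiplicativity is precisely what the passage to the opposite algebra is designed to produce: the product in $M_N(\C)^\circ$ is $a^\circ b^\circ = (ba)^\circ$, which is sent to $(ba)^{\sf T} = a^{\sf T} b^{\sf T}$, i.e.\ to the product of the images $a^{\sf T}$ and $b^{\sf T}$ in $M_N(\C)$. For the involution, $M_N(\C)^\circ$ carries the $*$-structure $(a^\circ)^* = (a^*)^\circ$, and since $a^* = \overline{a}^{\sf T}$ for complex matrices one has $(a^*)^{\sf T} = \overline{a} = (a^{\sf T})^*$; hence the map intertwines the two $*$-operations. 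This establishes the first assertion.

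For the consequence, I would tensor the identification of the first part with the identity on $M_N(\C)$ (over $\C$, since $M_N(\C)$ is a complex $*$-algebra) to obtain $M_N(\C) \otimes M_N(\C)^\circ \cong M_N(\C) \otimes_\C M_N(\C)$ as $*$-algebras, and then invoke the standard $*$-isomorphism $M_N(\C) \otimes_\C M_N(\C) \cong M_{N^2}(\C)$ which sends an elementary tensor $A \otimes B$ to its Kronecker product. The latter is a $*$-isomorphism because $(A \otimes B)^* = A^* \otimes B^*$ for the Kronecker product equipped with the conjugate transpose; composing the two maps yields the claim.

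There is no serious obstacle here. The only subtlety worth flagging is the bookkeeping of the correct $*$-structure on the opposite algebra, together with the observation that it is transposition — not the conjugate transpose — that realizes $M_N(\C)^\circ$ as a $*$-algebra isomorphic to $M_N(\C)$; the identity $(a^*)^{\sf T} = (a^{\sf T})^*$ is what makes this work.
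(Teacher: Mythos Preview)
Your proof is correct and follows essentially the same approach as the paper's (suppressed) argument: verify that transposition respects the opposite multiplication via $(BA)^{\sf T} = A^{\sf T} B^{\sf T}$, then invoke the standard Kronecker-product isomorphism $M_N(\C) \otimes_\C M_N(\C) \cong M_{N^2}(\C)$. Your treatment is in fact more complete, since you also check the $*$-compatibility $(a^*)^{\sf T} = (a^{\sf T})^*$ explicitly, which the paper leaves implicit.
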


Under this identification we thus have
$$e_{ij}^{\circ} \leftrightarrow e_{ji}$$
in terms of the standard basis $\lbrace e_{ij}\rbrace_{i,j=1}^N$ for $M_N(\C)$. 
We will write an arbitrary element in $M_N(\mathbb{C})\otimes M_N(\mathbb{C})^{\circ}$ as a sum
$$
\sum_{i,j,k,l} C_{ij,kl} e_{ij} \otimes e_{kl}^{\circ}.
$$

\subsubsection{$\mathcal{A}=M_2(\mathbb{C})$}
\label{subsec:M_2C}
As a warming-up, we first look at $\mathcal{A}=M_2(\mathbb{C})$. Note that we have four basis elements for which the normalization condition becomes
\begin{align*}
&(C_{11,11} + C_{12,21})e_{11}  + (C_{11,12} + C_{12,22})e_{12} \\&\quad + (C_{21,11} + C_{22,21})e_{21}
 + (C_{21,12} + C_{22,22})e_{22} = e_{11} + e_{22} .
\end{align*}
This amounts to the conditions
\begin{align*}
C_{11,11} + C_{12,21} = 1, \quad C_{21,12} + C_{22,22} = 1, \\
C_{11,12} + C_{12,22} = 0, \quad C_{21,11} + C_{22,21} = 0. 
\end{align*}
The self-adjointness condition reads $C_{ij,kl} = \overline{C_{lk,ji}}$ ({\it cf.} Section \ref{sec:genmat} below).

Using Lemma \ref{lma:matrix-opposite} we can identify
\begin{align*}
M_2(\mathbb{C}) \otimes M_2(\mathbb{C})^{\circ} & \rightarrow M_4(\mathbb{C}), \qquad e_{ij} \otimes e_{kl}^{\circ} \mapsto e_{ij} \otimes e_{lk}.
\end{align*}
in terms of the basis elements, and then extend this linearly to all of $M_2(\mathbb{C})\otimes M_2(\mathbb{C})^{\circ}$. The normalization and self-adjointness conditions on $C_{ij,kl}$ translate to $4 \times 4$-matrices to arrive at the following general form for an element $A \in \Pert(M_2(\mathbb{C}))$:
\begin{equation}
\label{eq:gen-form-A}
A =\begin{pmatrix}
x_1				& z_3	& \overline{z_3}	& 1-x_1				\\
z_1				& z_2	& \overline{z_5}	& -z_1				\\
\overline{z_1}	& z_5	& \overline{z_2}	& -\overline{z_1}	\\
x_2				& z_4	& \overline{z_4}	& 1-x_2				\\
\end{pmatrix},
\qquad z_1, \hdots z_5 \in \mathbb{C}, \; x_1,x_2 \in \mathbb{R}.\end{equation}
The semigroup law ensures that the product of two such matrices again has this general form, something which is not immediately clear. Let us make this point more transparent and establish conditions on $4 \times 4$ matrices that give rise to the above form. 

For an element $A \in M_4(\mathbb{C})$ to be of the form \eqref{eq:gen-form-A} is equivalent to demanding that
$$A (e_1 + e_4) = (e_1 + e_4),$$
$$\widehat{\Omega} \overline{A} = A \widehat{\Omega}, \qquad \text{where}\; \widehat{\Omega} = 
\begin{pmatrix}
1 & 0 & 0 & 0 \\
0 & 0 & 1 & 0 \\
0 & 1 & 0 & 0 \\
0 & 0 & 0 & 1 \\
\end{pmatrix}.$$
Equivalently, the matrix $\widehat{\Omega}$ can be rewritten as a block matrix
$$\widehat{\Omega} = 
\begin{pmatrix}
e_{11} & e_{21} \\
e_{12} & e_{22} \\
\end{pmatrix} =
\begin{pmatrix}
e_{11}^{\sf T}	& e_{12}^{\sf T}	\\
e_{21}^{\sf T}	& e_{22}^{\sf T}	\\
\end{pmatrix} = 
\sum e_{ij} \otimes e_{ji}.$$
Especially the last identity is useful, since we see that the eigenvectors of $\widehat{\Omega}$ are given by $e_1 \otimes e_1 \pm e_2 \otimes e_2$, with eigenvalue $1$, and $e_1\otimes e_2 \pm e_2 \otimes e_1$, with eigenvalue $1$ and $-1$ depending on the $+$ or $-$ sign. Hence, upon changing to the basis $$\{ e_1 \otimes e_1 + e_2 \otimes e_2, e_1 \otimes e_1 - e_2 \otimes e_2,e_1\otimes e_2 + e_2 \otimes e_1, e_1\otimes e_2 - e_2 \otimes e_1\}$$ of eigenvectors we will get
\begin{equation}
\label{eq:omega-2}
\Omega = \begin{pmatrix}
I_3 & 0 \\
0 & -1 \\
\end{pmatrix}.
\end{equation}
Moreover, the vector $e_1+e_4$ which is left invariant by $A$ is given by $e_1 \otimes e_1 + e_2 \otimes e_2 \in \C^2 \otimes \C^2$, which is also an eigenvector of $\widehat \Omega$. Hence with respect to this new basis we arrive at the following characterization of $\Pert(M_2(\C))$:
$$
\Pert(M_2(\mathbb{C})) \cong \Big\lbrace A \in M_4(\mathbb{C}) \mid A e_1 = e_1 , \;
\Omega \overline{A} = A \Omega\Big \rbrace,$$
with 
$$
e_1 = \begin{pmatrix} 1\\0 \\ 0 \\ 0 \end{pmatrix}, \qquad 
\Omega = \begin{pmatrix}
I_3 & 0 \\
0 & -1 \\
\end{pmatrix}.
$$
It now readily follows that if $A,B\in \Pert(M_2(\C))$ then so is their product $AB$.

More explicitly, elements in $\Pert(M_2(\C))$ are thus $4 \times 4$ matrices of the form
\begin{equation}
\label{eq:block-Pert-M2}
A = \begin{pmatrix}
1 & v_1 & v_2 & i w \\
0 & x_1 & x_2 & iy_1 \\
0 & x_3 & x_4 & iy_2 \\
0 & iy_3 & iy_4 & x_5 \\
\end{pmatrix},
\end{equation}
where $v_1,v_2,w, x_1,\ldots, x_5,y_1,\ldots,y_4 \in \R$.

The invertible elements in the perturbation semigroup are given by the invertible matrices in $M_4(\mathbb{C})$ which moreover fulfill the above conditions. Thus the group of invertible elements is given by
$$
\Pert(M_2(\mathbb{C}))^{\times} \cong \Big\lbrace A \in GL_4(\mathbb{C}) \mid A e_1 = e_1 , \; \Omega \overline{A} = A \Omega\Big \rbrace.
$$

We end this section by showing how the unitary group $U(2) = \U(M_2(\C))$ maps to this group of invertible elements. Recall that there is a group homomorphism
\begin{equation}
\label{eq:group-homo-U2}
\begin{aligned}
U(2) &\to \Pert(M_2(\C))^\times\\
u& \mapsto u \otimes u^{*\circ}.
\end{aligned}
\end{equation}
After identifying $M_2(\C)^\circ$ with $M_2(\C)$ using Lemma \ref{lma:matrix-opposite}, the element $u \otimes u^{*\circ}$ on the right-hand side of \eqref{eq:group-homo-U2} corresponds to the element $u \otimes \overline{u} \in M_2(\C) \otimes M_2(\C)$. In terms of representation theory, this means that the map in \eqref{eq:group-homo-U2} corresponds to the representation of $u \in U(2)$ on the tensor product $\C^2 \otimes \overline{\C^2}$ of the defining representation $\C^2$ and the conjugate representation $\overline{\C^2}$ of $U(2)$. It is well-known that this representation has the following decomposition in irreducible representations:
$$
\C^2 \otimes \overline{\C^2} \cong \C \oplus \C^3
$$
where $\C$ is the trivial representation space of $U(2)$ and $\C^3$ is the complexified adjoint representation space $\mathfrak{su}(2)_\C$. Moreover, the basis vector spanning the trivial representation is given by $e_1 \otimes e_1+ e_2 \otimes e_2$. If we compare this to the basis of eigenvectors for $\widehat{\Omega}$ that we found above, we conclude that the decomposition of $\C^2 \otimes \overline{\C^2}$ into irreducible representations corresponds precisely to the block decomposition of the matrix $A$ in \eqref{eq:block-Pert-M2}.

\subsubsection{$\mathcal{A}=M_N(\mathbb{C})$}
\label{sec:genmat}
With this example in mind we now proceed and determine $\Pert(M_N(\mathbb{C}))$. First note that with Lemma \ref{lma:matrix-opposite} the matrices in the perturbation semigroup $\Pert(M_N(\mathbb{C}))$ will be elements of $M_{N^2}(\mathbb{C})$. Again, we aim for defining conditions on such matrices using a suitable matrix $\widehat{\Omega}$ that are equivalent to the normalization and self-adjointness condition. 
\begin{lemma}
\label{lma:norm.genmat}
Let $A = C_{ij,kl} e_{ij}\otimes e_{kl}^{\circ}$, then the normalization condition is equivalent to
$$\sum_j C_{ij,jl} = \delta^i_l.$$
\end{lemma}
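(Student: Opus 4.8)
The plan is to compute the normalization condition $\sum_{i,j,k,l} C_{ij,kl}\, e_{ij} e_{kl}^{\circ}\cdot e_{kl}$ — more precisely, to write out explicitly the product that appears in the normalization constraint $\sum a_m b_m = 1$ when $a_m \otimes b_m^\circ = C_{ij,kl}\, e_{ij} \otimes e_{kl}^{\circ}$ — and match coefficients against the unit $1 = \sum_i e_{ii}$ of $M_N(\C)$. Concretely, the normalization condition asks that $\sum_{i,j,k,l} C_{ij,kl}\, e_{ij}\, e_{kl} = 1$, where the product $e_{ij} e_{kl}$ is the ordinary matrix product in $M_N(\C)$ (this is the point: the normalization constraint $\sum a_m b_m = 1$ uses the genuine product of $\A$, not the product of $\A \otimes \A^\circ$).

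First I would recall that in $M_N(\C)$ the matrix units multiply by $e_{ij} e_{kl} = \delta^j_k\, e_{il}$. Substituting this into $\sum_{i,j,k,l} C_{ij,kl}\, e_{ij} e_{kl}$ collapses the sum over $k$ against $\delta^j_k$, leaving
$$\sum_{i,j,l} C_{ij,jl}\, e_{il}.$$
Now I set this equal to $1 = \sum_p e_{pp} = \sum_{i,l} \delta^i_l\, e_{il}$ and compare coefficients of the (linearly independent) basis elements $e_{il}$. This immediately yields $\sum_j C_{ij,jl} = \delta^i_l$ for all $i,l$, which is exactly the claimed identity. For the converse direction, reversing the coefficient comparison shows that this family of equations forces the normalization condition to hold, so the equivalence is complete.

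There is essentially no obstacle here: the only thing to be careful about is bookkeeping with the four indices and the distinction between the product in $\A$ (used in the normalization condition) and the product in $\A \otimes \A^\circ$ (which appears nowhere in this particular lemma). It may be worth remarking, as a sanity check, that specializing to $N = 2$ recovers the four normalization equations displayed in Section~\ref{subsec:M_2C}: the case $i = l$ gives $C_{11,11} + C_{12,21} = 1$ and $C_{21,12} + C_{22,22} = 1$, while $i \neq l$ gives $C_{11,12} + C_{12,22} = 0$ and $C_{21,11} + C_{22,21} = 0$. I would also note in passing that the condition $\sum_j C_{ij,jl} = \delta^i_l$ has a clean interpretation: viewing $C$ as the $N^2 \times N^2$ matrix of the operator on $\C^N \otimes \C^N$ under the identification of Lemma~\ref{lma:matrix-opposite}, it says precisely that this operator fixes the vector $\sum_i e_i \otimes e_i$, generalizing the condition $A e_1 = e_1$ found for $N = 2$.
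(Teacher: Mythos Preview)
Your argument is essentially identical to the paper's: you write out the normalization condition $\sum C_{ij,kl}\, e_{ij} e_{kl} = 1$, use $e_{ij} e_{kl} = \delta^j_k\, e_{il}$ to collapse to $\sum_{i,j,l} C_{ij,jl}\, e_{il} = \sum_i e_{ii}$, and read off coefficients. Your additional sanity check against the $N=2$ case and the remark on the invariant vector $\sum_i e_i \otimes e_i$ are fine embellishments (the latter is exactly the content of the Remark following the lemma in the paper).
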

\begin{proof}

For such an element $A$ the normalization condition reads 
$$\sum_{i,j,k,l} C_{ij,kl}e_{ij}e_{kl} = 1 \equiv \sum_i e_{ii},$$
or, equivalently, 
$$\sum_{i,j,k,l} C_{ij,kl} e_{il} \delta^j_k = \sum_{i,j,l} C_{ij,jl} e_{il} = \sum_i e_{ii}.
$$
Reading off the coefficients gives the desired result. 
\end{proof}
\begin{remark}
Note that this result, and hence the normalization condition, is equivalent to the condition that $\sum_i e_i \otimes e_i$ is an eigenvector for such a matrix $A$ in the perturbation semigroup with eigenvalue $1$.
\end{remark}

\begin{lemma}
For $A = \sum C_{ij,kl} e_{ij}\otimes e_{kl}^{\circ}$ the self-adjointness condition is equivalent to demanding
$$C_{ij,kl} = \overline{C_{lk,ji}}.$$
\end{lemma}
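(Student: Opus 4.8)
The plan is to expand both sides of the self-adjointness condition $\sum_j a_j \otimes b_j^\circ = \sum_j b_j^* \otimes a_j^{*\circ}$ in terms of the standard matrix units and read off coefficients, exactly mirroring the computation already done for $\C^N$ in Proposition \ref{prop:pert-CN} and for $M_2(\C)$ in Section \ref{subsec:M_2C}. First I would write the generic element as $A = \sum_{i,j,k,l} C_{ij,kl}\, e_{ij} \otimes e_{kl}^{\circ}$, so that $a$-slots carry the index pair $(ij)$ and $b$-slots the pair $(kl)$. The left-hand side of the self-adjointness condition is already in this form.

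Next I would compute the right-hand side. Using $(e_{ij})^* = e_{ji}$ (the standard $*$-structure on $M_N(\C)$), the element $\sum b_j^* \otimes a_j^{*\circ}$ becomes $\sum_{i,j,k,l} \overline{C_{ij,kl}}\, e_{lk} \otimes e_{ji}^{\circ}$, where the overline records that passing a scalar through $*$ conjugates it (and recall that for the opposite algebra $(x^\circ)^* = (x^*)^\circ$, so the $\circ$ and $*$ operations on the second tensor factor are compatible with this bookkeeping). Relabelling the summation indices $i \leftrightarrow l$ and $j \leftrightarrow k$ turns this into $\sum_{i,j,k,l} \overline{C_{lk,ji}}\, e_{ij} \otimes e_{kl}^{\circ}$. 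Comparing coefficients of the linearly independent basis elements $e_{ij} \otimes e_{kl}^{\circ}$ on the two sides yields $C_{ij,kl} = \overline{C_{lk,ji}}$, which is the claim. The converse is immediate: if the coefficients satisfy this relation, running the computation backwards shows the two tensors agree.

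There is no real obstacle here; the only thing to be careful about is the index bookkeeping under the two involutions — in particular that $*$ reverses the order within each matrix-unit index pair while $\circ$ leaves the stored matrix unchanged but reverses products, so that on the chosen identification $M_N(\C)^\circ \cong M_N(\C)$ (Lemma \ref{lma:matrix-opposite}) the combined effect on a basis element $e_{ij}\otimes e_{kl}^\circ$ is to send it to $e_{lk}\otimes e_{ji}^\circ$ with a conjugated coefficient. Once that single substitution is pinned down, reading off coefficients finishes the proof, and this also makes transparent the special cases $C_{ij} = \overline{C_{ji}}$ for $\C^N$ and $C_{ij,kl} = \overline{C_{lk,ji}}$ quoted for $M_2(\C)$.
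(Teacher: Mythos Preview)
Your proof is correct and follows essentially the same approach as the paper: expand both sides of the self-adjointness condition in the basis $e_{ij}\otimes e_{kl}^\circ$, use $(e_{ij})^* = e_{ji}$ together with conjugation of the scalar coefficient, relabel $i\leftrightarrow l$, $j\leftrightarrow k$, and compare coefficients to obtain $C_{ij,kl} = \overline{C_{lk,ji}}$.
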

\begin{proof}
The condition $A = A^*$ becomes
$$
 \sum C_{ij,kl} e_{ij} \otimes e_{kl}^{\circ} \\
 =  \sum \overline{C_{ij,kl}}e_{kl}^* \otimes e_{ij}^{* \circ} \\
 =  \sum \overline{C_{ij,kl}} e_{lk} \otimes e_{ji}^{\circ}.
$$
If we now relabel the last expression, we get $\sum \overline{C_{lk,ji}} e_{ij} \otimes e_{kl}^{\circ},$ so that $C_{ij,kl} = \overline{C_{lk,ji}}$.
\end{proof}
We now have the following proposition.
\begin{proposition}
\label{prop:comm.omega.genmat}
Let $A = \sum C_{ij,kl} e_{ij} \otimes e_{kl}^{\circ}$. Then $C_{ij,kl} = \overline{C_{lk,ji}}$ if and only if $\widehat{\Omega}\overline{A} = A \widehat{\Omega}$ with $\widehat{\Omega} = \sum e_{ij} \otimes e_{ij}^{\circ}\in M_N(\mathbb{C}) \otimes M_N(\mathbb{C})^{\circ}$.
\end{proposition}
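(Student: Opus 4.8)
The plan is to compute the matrix $\widehat{\Omega}\overline{A}$ and $A\widehat{\Omega}$ entry-by-entry in the basis $\{e_{ij}\otimes e_{kl}^\circ\}$ and compare coefficients. First I would record the multiplication rule in $M_N(\C)\otimes M_N(\C)^\circ$ coming from Lemma \ref{lma:matrix-opposite}: since $e_{kl}^\circ \leftrightarrow e_{lk}$ and $e_{ab}e_{cd} = \delta^b_c e_{ad}$, we have
$$
(e_{ij}\otimes e_{kl}^\circ)(e_{pq}\otimes e_{rs}^\circ) = (e_{ij}e_{pq})\otimes (e_{rs}e_{kl})^\circ = \delta^j_p\,\delta^s_k\, e_{iq}\otimes e_{rl}^\circ .
$$
In particular, with $\widehat{\Omega} = \sum_{a,b} e_{ab}\otimes e_{ab}^\circ$, left multiplication by $\widehat\Omega$ sends $e_{ij}\otimes e_{kl}^\circ$ to $\sum_{a,b}\delta^b_i\,\delta^l_a\, e_{aj}\otimes e_{bl}^\circ = e_{lj}\otimes e_{il}^\circ$, and similarly right multiplication by $\widehat\Omega$ sends $e_{ij}\otimes e_{kl}^\circ$ to $\sum_{a,b}\delta^j_a\,\delta^b_k\, e_{ib}\otimes e_{ak}^\circ = e_{ik}\otimes e_{jk}^\circ$.

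Next I would apply these to a general element. Writing $A=\sum C_{ij,kl}\,e_{ij}\otimes e_{kl}^\circ$, the entry $\overline{A}$ has coefficients $\overline{C_{ij,kl}}$ in the same basis, so
$$
\widehat{\Omega}\,\overline{A} = \sum_{i,j,k,l}\overline{C_{ij,kl}}\; e_{lj}\otimes e_{il}^\circ,
\qquad
A\,\widehat{\Omega} = \sum_{i,j,k,l} C_{ij,kl}\; e_{ik}\otimes e_{jk}^\circ .
$$
Now I would reindex each sum so that the basis vector is $e_{pq}\otimes e_{rs}^\circ$. For the first sum, matching $e_{pq}\otimes e_{rs}^\circ = e_{lj}\otimes e_{il}^\circ$ forces $p=l$, $q=j$, $r=i$, $s=l$, i.e.\ $s=p$; the index $k$ is free, but the coefficient $\overline{C_{ij,kl}}$ does not depend on $k$ only once we notice that the constraint $s=p$ must hold for a nonzero contribution, and with $i=r$, $j=q$, $l=p=s$ the coefficient is $\overline{C_{rq,k p}}$ — here one has to be careful, because the left-hand side genuinely only produces basis vectors with $r=s$. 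The cleanest route is therefore to first observe that both $\widehat\Omega\overline A$ and $A\widehat\Omega$ are supported only on basis vectors $e_{pq}\otimes e_{rs}^\circ$ with $s = p$ (for $\widehat\Omega\overline A$) respectively $q = r$… so I would instead avoid the free-index subtlety by computing the $(e_{pq}\otimes e_{rs}^\circ)$-coefficient of each side directly: the coefficient of $e_{pq}\otimes e_{rs}^\circ$ in $\widehat\Omega\overline A$ is $\delta^s_p\,\overline{C_{rq,kp}}$ summed appropriately, and in $A\widehat\Omega$ it is $\delta^q_r\, C_{pq,sq}$. Equating these for all $p,q,r,s$ and reading off the surviving terms yields exactly $C_{ij,kl} = \overline{C_{lk,ji}}$ after renaming indices, and conversely this relation makes the two sides agree.

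The main obstacle I anticipate is purely bookkeeping: the element $\widehat\Omega = \sum e_{ij}\otimes e_{ij}^\circ$ is \emph{not} $\sum e_{ij}\otimes e_{ji}^\circ$, and under the identification $e_{kl}^\circ\leftrightarrow e_{lk}$ of Lemma \ref{lma:matrix-opposite} the operator $\widehat\Omega$ becomes $\sum e_{ij}\otimes e_{ji}\in M_{N^2}(\C)$, which is the flip (swap) operator on $\C^N\otimes\C^N$ — consistent with the $N=2$ computation in Section \ref{subsec:M_2C}. Keeping the $\circ$-bookkeeping straight, and making sure the complex conjugation lands on the correct factor, is where an error is most likely to creep in; once the flip interpretation is used, the statement is essentially that $A$ intertwines the flip composed with complex conjugation in the prescribed way, and the index identity $C_{ij,kl}=\overline{C_{lk,ji}}$ drops out. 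I would double-check the final index relation against the $M_2(\C)$ case, where it reads $C_{ij,kl}=\overline{C_{lk,ji}}$ and reproduces the pattern of \eqref{eq:gen-form-A}.
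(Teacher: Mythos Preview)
Your approach is exactly the paper's: compute $A\widehat\Omega$ and $\widehat\Omega\overline A$ in the basis $\{e_{ij}\otimes e_{kl}^\circ\}$ and compare coefficients. However, you make the same index slip in both one-sided multiplications, and that slip is the sole source of your ``free-index subtlety'' and the spurious support constraints. Applying your own product rule $(e_{IJ}\otimes e_{KL}^\circ)(e_{PQ}\otimes e_{RS}^\circ)=\delta^J_P\delta^S_K\,e_{IQ}\otimes e_{RL}^\circ$ carefully gives
\[
\widehat\Omega\,(e_{ij}\otimes e_{kl}^\circ)=e_{lj}\otimes e_{ki}^\circ,
\qquad
(e_{ij}\otimes e_{kl}^\circ)\,\widehat\Omega=e_{ik}\otimes e_{jl}^\circ,
\]
not $e_{lj}\otimes e_{il}^\circ$ and $e_{ik}\otimes e_{jk}^\circ$ as you wrote: in each case you lost one of the original four indices ($k$, respectively $l$) from the second tensor factor. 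With the corrected formulas all four indices survive, so
\[
\widehat\Omega\,\overline A=\sum_{i,j,k,l}\overline{C_{ij,kl}}\,e_{lj}\otimes e_{ki}^\circ=\sum_{i,j,k,l}\overline{C_{lk,ji}}\,e_{ik}\otimes e_{jl}^\circ,
\qquad
A\,\widehat\Omega=\sum_{i,j,k,l} C_{ij,kl}\,e_{ik}\otimes e_{jl}^\circ,
\]
and the comparison $C_{ij,kl}=\overline{C_{lk,ji}}$ is immediate --- no constraint $s=p$ or $q=r$, no free summation index. This is precisely the paper's computation. Your closing remark that $\widehat\Omega$ becomes the flip $\sum e_{ij}\otimes e_{ji}$ on $\C^N\otimes\C^N$ under Lemma~\ref{lma:matrix-opposite} is correct and is exactly how the paper proceeds after this proposition.
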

\begin{proof}
We can write $\widehat{\Omega}$ as $\widehat{\Omega} = \sum \delta_m^r\delta_n^s e_{mn}\otimes e_{rs}^{\circ}$. Starting with the right hand side of the equation we get
$$\begin{array}{rcl}
A \widehat{\Omega} & = & (\sum C_{ij,kl} e_{ij}\otimes e_{kl}^{\circ}) (\sum \delta_m^r\delta_n^s e_{mn}\otimes e_{rs}^{\circ} )\\
& = & \sum C_{ij,kl} \delta_m^r\delta_n^s e_{ij}e_{mn} \otimes (e_{rs}e_{kl})^{\circ} \\
& = & \sum C_{ij,kl} \delta_m^r\delta_n^s\delta_j^m\delta_s^k e_{in} \otimes e_{rl}^{\circ} \\

& = & \sum C_{ij,kl} e_{ik} \otimes e_{jl}^{\circ}.
\end{array}$$
The left hand side of the equation reads
$$\begin{array}{rcl}
\widehat{\Omega} \overline{A} & = & (\sum \delta_m^r \delta_n^s e_{mn} \otimes e_{rs}^{\circ})(\sum \overline{C_{ij,kl}} e_{ij} \otimes e_{kl}^{\circ}) \\
& = & \sum \overline{C_{ij,kl}} \delta_m^r \delta_n^s e_{mn}e_{ij} \otimes (e_{kl}e_{rs})^{\circ} \\
& = & \sum \overline{C_{ij,kl}} \delta_m^r \delta_n^s \delta_i^n e_{mj} \otimes \delta_r^l e_{ks}^{\circ} \\

& = & \sum \overline{C_{ij,kl}} e_{lj} \otimes e_{ki}^{\circ} \\
& = & \sum \overline{C_{lk,ji}} e_{ik} \otimes e_{jl}^{\circ}.
\end{array}$$
Thus we have $C_{ij,kl} = \overline{C_{lk,ji}}$ if and only if $\widehat{\Omega}\overline{A} = A \widehat{\Omega}$.
\end{proof}
We now make the following identification
\begin{align}
\label{ident:genmat}
M_N(\mathbb{C}) \otimes M_N(\mathbb{C})^{\circ} & \rightarrow M_{N^2}(\mathbb{C}), \qquad  e_{ij} \otimes e_{kl}^{\circ} \mapsto e_{ij} \otimes e_{lk}, 
\end{align}
after which we can bring $\widehat{\Omega}$ into a more appealing form as a block matrix:
$$\widehat{\Omega} = \sum e_{ij} \otimes e_{ji}.$$
\begin{lemma}
The eigenvectors of $\widehat{\Omega}$ are given by $e_k \otimes e_l \pm e_l \otimes e_k$ with respective eigenvalues $1$ (for any $k,l=1,\ldots, N$) and $-1$ (for $k \neq l$).
\end{lemma}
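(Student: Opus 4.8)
The matrix $\widehat{\Omega} = \sum_{i,j} e_{ij}\otimes e_{ji}$, viewed as an operator on $\C^N\otimes\C^N$ after the identification \eqref{ident:genmat}, is simply the flip operator that interchanges the two tensor factors, so the statement should follow from one direct computation together with a dimension count. First I would evaluate $\widehat{\Omega}$ on an arbitrary basis vector $e_a\otimes e_b$. Using $e_{ij}e_c = \delta^c_j\, e_i$ we get
$$
\widehat{\Omega}(e_a\otimes e_b) = \sum_{i,j} (e_{ij}e_a)\otimes(e_{ji}e_b) = \sum_{i,j} \delta^a_j\,\delta^b_i\, e_i\otimes e_j = e_b\otimes e_a,
$$
so $\widehat{\Omega}$ is the flip $\tau\colon u\otimes v\mapsto v\otimes u$. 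In particular $\tau^2 = \mathrm{id}$, which already forces every eigenvalue of $\widehat{\Omega}$ to be $+1$ or $-1$.

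Next I would check the proposed eigenvectors. For all $k,l$ one has $\tau(e_k\otimes e_l + e_l\otimes e_k) = e_l\otimes e_k + e_k\otimes e_l$, which is the same vector, hence eigenvalue $1$ (for $k=l$ this is just $2\,e_k\otimes e_k$). For $k\neq l$ one has $\tau(e_k\otimes e_l - e_l\otimes e_k) = e_l\otimes e_k - e_k\otimes e_l = -(e_k\otimes e_l - e_l\otimes e_k)$, hence eigenvalue $-1$. These are the verifications requested by the statement.

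Finally, to conclude that these exhaust the picture, I would count: the symmetric combinations $e_k\otimes e_l + e_l\otimes e_k$ with $k\le l$ are $N(N+1)/2$ in number, and the antisymmetric ones $e_k\otimes e_l - e_l\otimes e_k$ with $k<l$ are $N(N-1)/2$ in number; together these $N^2$ vectors are linearly independent, since $e_a\otimes e_b$ with $a<b$ is recovered as half the sum of the corresponding symmetric and antisymmetric vector, and $e_a\otimes e_a$ from the symmetric one. Thus they form a basis of $\C^{N^2}=\C^N\otimes\C^N$ consisting of eigenvectors of $\widehat{\Omega}$, and since $\tau^2=\mathrm{id}$ there are no further eigenvalues. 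The argument is entirely elementary; the only places needing a little care are the index bookkeeping in the first computation and not mistakenly including the (vanishing) vectors $e_k\otimes e_k$ among the $-1$-eigenvectors.
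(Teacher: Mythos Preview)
Your proof is correct and follows essentially the same route as the paper: both reduce to the elementary computation $\widehat{\Omega}(e_a\otimes e_b)=e_b\otimes e_a$ and then read off the eigenvalues on the symmetric and antisymmetric combinations. Your extra remarks that $\widehat{\Omega}$ is the flip (so $\widehat{\Omega}^2=\mathrm{id}$ forces eigenvalues $\pm 1$) and the dimension count make the completeness of the eigenbasis explicit, which the paper's one-line verification leaves implicit.
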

\begin{proof}
This follows by elementary matrix multiplication:
$$
\sum e_{ij}e_k \otimes e_{ji}e_l \pm \sum e_{ij}e_l \otimes e_{ji}e_k
=  e_l \otimes e_k \pm e_k\otimes e_l. \qedhere
$$

\end{proof}
Since $e_k \otimes e_k$ is an eigenvector with eigenvalue $1$ for all $k$, we see that their sum must be an eigenvector with eigenvalue $1$ as well, i.e. we have
$$\widehat{\Omega} (\sum_i e_i \otimes e_i) = \sum_i e_i \otimes e_i.$$
We change to a basis consisting of eigenvectors, where we take $\sum e_i \otimes e_i$ to be identified with $e_1$ in the new basis. This gives us
\begin{equation}
\label{eqref:Omega}
\Omega = \begin{pmatrix}
I_{N(N+1)/2} & 0 \\
0 & -I_{N(N-1)/2} \\
\end{pmatrix}.
\end{equation}
As we have seen before $\sum e_i \otimes e_i$ in terms of the old basis is an invariant vector of a matrix $A$ in the perturbation semigroup. Thus, in the new basis the vector $e_1$ is an invariant vector for such a matrix $A$. We summarize the above results by the following
\begin{proposition}
We have
\begin{equation}
\label{eq:pertmatrix}
\Pert(M_N(\mathbb{C})) \cong \Big \lbrace A \in M_{N^2}(\mathbb{C}) \mid A e_1 = e_1,\; \Omega \overline{A} = A \Omega \Big \rbrace
\end{equation}
where
$$
e_1 = \begin{pmatrix} 1 \\ 0 \\ \vdots \\ 0 \end{pmatrix}, \qquad 
\Omega = \begin{pmatrix}
I_{N(N+1)/2} & 0 \\
0 & -I_{N(N-1)/2} \\
\end{pmatrix}.$$
The semigroup structure is given by matrix multiplication.
\end{proposition}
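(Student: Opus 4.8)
The plan is to assemble the lemmas established just above. We work throughout under the identification \eqref{ident:genmat}, which by Lemma \ref{lma:matrix-opposite} is an isomorphism of $*$-algebras $M_N(\mathbb{C}) \otimes M_N(\mathbb{C})^{\circ} \cong M_{N^2}(\mathbb{C})$; in particular the product $\big(\sum a_j \otimes \widetilde a_j^{\circ}\big)\big(\sum b_k \otimes \widetilde b_k^{\circ}\big) = \sum a_j b_k \otimes (\widetilde b_k \widetilde a_j)^{\circ}$ on $\Pert(M_N(\mathbb{C}))$ goes over to ordinary matrix multiplication in $M_{N^2}(\mathbb{C})$, which takes care of the final sentence of the statement. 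It therefore remains to show that the normalization and self-adjointness conditions cut out precisely the set on the right-hand side of \eqref{eq:pertmatrix}.

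First I would deal with normalization. By Lemma \ref{lma:norm.genmat} it amounts to $\sum_j C_{ij,jl} = \delta^i_l$, and by the Remark following that lemma this is exactly the statement that $v := \sum_i e_i \otimes e_i \in \mathbb{C}^N \otimes \mathbb{C}^N$ is fixed by $A$, i.e. $Av = v$. For self-adjointness, the lemma identifying the self-adjointness condition gives $C_{ij,kl} = \overline{C_{lk,ji}}$, and Proposition \ref{prop:comm.omega.genmat} translates this into $\widehat{\Omega}\,\overline{A} = A\,\widehat{\Omega}$, where under \eqref{ident:genmat} the matrix $\widehat{\Omega} = \sum e_{ij} \otimes e_{ij}^{\circ}$ takes the block form $\widehat{\Omega} = \sum e_{ij} \otimes e_{ji}$. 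At this point $\Pert(M_N(\mathbb{C}))$ is realized as $\{A \in M_{N^2}(\mathbb{C}) \mid Av = v,\ \widehat{\Omega}\,\overline{A} = A\,\widehat{\Omega}\}$.

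The last step is the change of basis diagonalizing $\widehat{\Omega}$. By the lemma on the eigenvectors of $\widehat{\Omega}$, the symmetric vectors $e_k \otimes e_l + e_l \otimes e_k$ ($k \le l$) span the $(+1)$-eigenspace, of dimension $N(N+1)/2$, and the antisymmetric vectors $e_k \otimes e_l - e_l \otimes e_k$ ($k < l$) span the $(-1)$-eigenspace, of dimension $N(N-1)/2$; since $N(N+1)/2 + N(N-1)/2 = N^2$ these together form a basis of $\mathbb{C}^{N^2}$. I order this basis with the symmetric vectors first, and --- noting that $v = \sum_i e_i \otimes e_i$ lies in the $(+1)$-eigenspace --- take $v$ itself as the first basis vector. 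All these vectors have real (integer) entries, so the change-of-basis matrix $P$ is real; in the new basis $\widehat{\Omega}$ becomes the diagonal matrix $\Omega$ with $N(N+1)/2$ entries $+1$ followed by $N(N-1)/2$ entries $-1$, and $v$ becomes $e_1 = (1,0,\ldots,0)^{\sf T}$. Passing to the new basis replaces $A$ by $A' = P^{-1}AP$; since $P$ is real, $\overline{A'} = P^{-1}\overline{A}P$, and a direct computation shows that $Av = v$ and $\widehat{\Omega}\,\overline{A} = A\,\widehat{\Omega}$ become respectively $A'e_1 = e_1$ and $\Omega\,\overline{A'} = A'\,\Omega$, which is exactly \eqref{eq:pertmatrix}.

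I expect the only delicate point to be the interplay between the change of basis and the complex conjugation: one must use that the eigenbasis of $\widehat{\Omega}$ can be taken real, so that conjugation by $P$ commutes with entrywise complex conjugation and hence the relation $\widehat{\Omega}\,\overline{A} = A\,\widehat{\Omega}$ is transported correctly to $\Omega\,\overline{A'} = A'\,\Omega$. Everything else is bookkeeping of the earlier lemmas together with the dimension count for the two eigenspaces of $\widehat{\Omega}$.
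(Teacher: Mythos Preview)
Your proposal is correct and follows exactly the paper's route: the proposition is stated there as a summary of the preceding lemmas (normalization $\Leftrightarrow$ $v=\sum e_i\otimes e_i$ fixed, self-adjointness $\Leftrightarrow$ $\widehat{\Omega}\,\overline{A}=A\,\widehat{\Omega}$) followed by the change to an eigenbasis of $\widehat{\Omega}$ with $v$ as first vector. You are in fact slightly more careful than the paper in making explicit that the eigenbasis can be taken real, so that conjugation by $P$ commutes with entrywise complex conjugation and the relation $\widehat{\Omega}\,\overline{A}=A\,\widehat{\Omega}$ transports correctly to $\Omega\,\overline{A'}=A'\,\Omega$.
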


This allows for the following explicit description of elements in the perturbation semigroup $\Pert(M_N(\C))$. 
Let $A \in M_{N^2}(\mathbb{C})$ with $Ae_1 = e_1$, then we get that
\begin{equation}
\label{eq:block-pert-MN}
A = \begin{pmatrix}
1 & v \\
0 & B \\
\end{pmatrix},
\end{equation}
where $v$ is a row vector of length $N^2 -1$, while $B \in M_{N^2-1}(\mathbb{C})$. The condition that $\Omega \overline{A} = A \Omega$ then implies that
$$\Omega' \overline{B} = B \Omega',$$
and
$$\overline{v} = v \Omega'.$$
in terms of the matrix 
$$\Omega' = \begin{pmatrix}
I_{N(N+1)/2 -1} & 0 \\
0 & -I_{N(N-1)/2} \\
\end{pmatrix} = \begin{pmatrix}
I_{(N+2)(N-1)/2} & 0 \\
0 & -I_{N(N-1)/2} \\
\end{pmatrix}.$$
If we work this out we see that
$$v = \begin{pmatrix}
v_1 & iv_2 \\
\end{pmatrix},$$
where $v_1$ and $v_2$ are real row vectors of length $(N+2)(N-1)/2$ and $N(N-1)/2$, respectively. We also see that 
$$B = \begin{pmatrix}
B_1 & iB_2 \\
iB_3 & B_4 \\
\end{pmatrix},$$
where $B_1 \in M_{(N+2)(N-1)/2}(\mathbb{R}), \ldots, B_4 \in M_{N(N-1)/2}(\mathbb{R})$. 

This motivates the definition of a real vector space $V$ and semigroup $S$ by
\begin{align*}
V & =  \big \lbrace v \in \mathbb{C}^{N^2-1} \mid \overline{v} = v \Omega '  \big \rbrace, \\
S  &=  \big \lbrace B \in M_{N^2-1}(\mathbb{C}) \mid \Omega ' \overline{B} = B \Omega ' \big \rbrace.
\end{align*}
and to consider the semidirect product $V \rtimes S$ of $V$ and $S$. The semigroup law of $V \rtimes S$ is given by
$$(v,B)\cdot(v',B') = (v' + vB', BB').$$
\begin{proposition}
\label{prop:semidirect-prod-pert-MN}
For $V$ and $S$ as above we have an isomorphism of semigroups:
$$\Pert(M_N(\mathbb{C})) \cong V \rtimes S.$$
\end{proposition}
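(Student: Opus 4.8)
The plan is to exhibit the isomorphism $\Pert(M_N(\mathbb C)) \cong V \rtimes S$ directly using the block decomposition \eqref{eq:block-pert-MN} already established. By the previous proposition we have identified $\Pert(M_N(\mathbb C))$ with the set of matrices $A \in M_{N^2}(\mathbb C)$ satisfying $Ae_1 = e_1$ and $\Omega\overline A = A\Omega$; the discussion following that proposition shows that every such $A$ is of the form $A = \left(\begin{smallmatrix} 1 & v \\ 0 & B\end{smallmatrix}\right)$ with $v \in V$ and $B \in S$. So the map
$$
\Phi\colon \Pert(M_N(\mathbb C)) \to V \rtimes S, \qquad
\begin{pmatrix} 1 & v \\ 0 & B \end{pmatrix} \mapsto (v, B),
$$
is a well-defined bijection of sets: surjectivity is clear from the block form, and injectivity is immediate since $A$ is determined by $(v,B)$. (Here one should double-check that the two conditions $\overline v = v\Omega'$ and $\Omega'\overline B = B\Omega'$ are not only necessary but also sufficient for $\left(\begin{smallmatrix} 1 & v \\ 0 & B\end{smallmatrix}\right)$ to lie in $\Pert(M_N(\mathbb C))$, which follows by reading the block equation $\Omega\overline A = A\Omega$ componentwise against $\Omega = \left(\begin{smallmatrix} 1 & 0 \\ 0 & \Omega'\end{smallmatrix}\right)$.)

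The remaining point is that $\Phi$ is a semigroup homomorphism. This is the computational heart of the argument, but it is short: block multiplication in $M_{N^2}(\mathbb C)$ gives
$$
\begin{pmatrix} 1 & v \\ 0 & B \end{pmatrix}
\begin{pmatrix} 1 & v' \\ 0 & B' \end{pmatrix}
=
\begin{pmatrix} 1 & v' + vB' \\ 0 & BB' \end{pmatrix},
$$
so that the product of two elements of $\Pert(M_N(\mathbb C))$ corresponds to the pair $(v' + vB',\, BB')$, which is precisely the semidirect product law $(v,B)\cdot(v',B') = (v' + vB', BB')$ stated above. One should also observe that $V \rtimes S$ really is a semigroup: $V$ is a right $S$-module under the action $v \cdot B' = vB'$ (note $vB' \in V$ because $\overline{vB'} = \overline v\,\overline{B'} = v\Omega'\overline{B'} = vB'\Omega'$, using $v \in V$ and $B' \in S$), and associativity of the semidirect law is inherited from associativity of matrix multiplication; alternatively associativity is automatic once we know $\Phi$ is a bijective homomorphism onto it from a semigroup.

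I do not expect a serious obstacle here: the substantive work — translating the normalization and self-adjointness conditions into $Ae_1 = e_1$ and $\Omega\overline A = A\Omega$, and then reading off the real-block structure of $v$ and $B$ — has already been done in the preceding propositions. The only mild subtlety is bookkeeping: making sure the closure of $V$ under the $S$-action is recorded (so that the law on $V \rtimes S$ makes sense), and making sure that the block conditions are stated as an equivalence rather than merely an implication. Once those are in place, the proof is a one-line block-matrix multiplication.
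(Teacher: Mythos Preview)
Your proposal is correct and follows essentially the same route as the paper: both arguments use the block form \eqref{eq:block-pert-MN} and the one-line block-matrix multiplication to identify the product in $\Pert(M_N(\mathbb C))$ with the semidirect-product law on $V\rtimes S$. Your write-up is in fact more careful than the paper's, which omits the explicit check that $(v,B)\mapsto\left(\begin{smallmatrix}1&v\\0&B\end{smallmatrix}\right)$ is a bijection and that $V$ is closed under the $S$-action; you also have the correct upper-right entry $v'+vB'$ (the paper prints $v'+vB$, which is a typo).
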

\begin{proof}
Let $A,A' \in \Pert(M_N(\mathbb{C}))$, then we have
\begin{equation*}
A = \begin{pmatrix}
1 & v \\
0 & B \\
\end{pmatrix}, \; A' = \begin{pmatrix}
1 & v' \\
0 & B' \\
\end{pmatrix}
\end{equation*}
for suitable $v,v' \in V$ and $B,B' \in S$. If we now multiply $A$ and $A'$ we get
$$AA' = \begin{pmatrix}
1 & v \\
0 & B \\
\end{pmatrix}\begin{pmatrix}
1 & v' \\
0 & B' \\
\end{pmatrix} = \begin{pmatrix}
1 & v' + vB \\
0 & BB' \\
\end{pmatrix}.$$
This coincides with the semigroup law in $V \rtimes S$, thus completing the proof. 
\end{proof}


\begin{corl}
Let $V$ be as above and define the group $G$ as 
$$G = \big \lbrace A \in GL_{N^2-1}(\mathbb{C}) \mid \Omega ' \overline{A} = A \Omega ' \big \rbrace.$$
Then the invertible elements in $\Pert(M_N(\C))$ form the semidirect product group 
$$\Pert(M_N(\mathbb{C}))^{\times} \cong V \rtimes G.$$
\end{corl}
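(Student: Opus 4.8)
The plan is to combine Proposition~\ref{prop:semidirect-prod-pert-MN}, which already identifies $\Pert(M_N(\C))\cong V\rtimes S$, with a characterization of the invertible elements of this semidirect product. The key observation is that an element $(v,B)\in V\rtimes S$ corresponds under the isomorphism to the block matrix $\left(\begin{smallmatrix}1&v\\0&B\end{smallmatrix}\right)\in M_{N^2}(\C)$, and such a block-upper-triangular matrix with a $1$ in the top-left corner is invertible in $M_{N^2}(\C)$ if and only if the lower-right block $B$ is invertible in $M_{N^2-1}(\C)$. So the first step is simply to note that $\Pert(M_N(\C))^\times$ consists precisely of those pairs $(v,B)$ with $B$ invertible, i.e. $B\in S\cap GL_{N^2-1}(\C)$.

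Next I would identify $S\cap GL_{N^2-1}(\C)$ with the group $G$ as defined in the statement. One inclusion is immediate: if $B\in S$ is invertible then $\Omega'\overline B=B\Omega'$ and $B\in GL_{N^2-1}(\C)$, so $B\in G$. Conversely, if $B\in G$ then $B$ is invertible and satisfies the commutation relation, hence lies in $S$; thus $B\in S\cap GL_{N^2-1}(\C)$. One should also check that $G$ really is a group, i.e. closed under multiplication and inverses: closure under products follows because the relation $\Omega'\overline{B}=B\Omega'$ is preserved under multiplication (this is essentially the computation already used to show $S$ is a semigroup), and closure under inverses follows by multiplying $\Omega'\overline B = B\Omega'$ on the left and right by $B^{-1}$ and $\overline{B}^{-1}$ respectively, using $\overline{B^{-1}}=(\overline B)^{-1}$, to obtain $\Omega'\overline{B^{-1}} = B^{-1}\Omega'$. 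Finally, I would verify that $V$ is stable under the action $v\mapsto vB$ for $B\in G$ needed to form $V\rtimes G$: if $\overline v = v\Omega'$ and $\Omega'\overline B = B\Omega'$, then $\overline{vB} = \overline v\,\overline B = v\Omega'\overline B = vB\Omega'$, so $vB\in V$. (Strictly, since $B$ acts on the right we need $\overline{vB}=(vB)\Omega'$, which is exactly what was just shown.)

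Assembling these pieces: the invertible elements of $\Pert(M_N(\C))$ are the block matrices $\left(\begin{smallmatrix}1&v\\0&B\end{smallmatrix}\right)$ with $v\in V$ and $B\in G$, and the semigroup multiplication $(v,B)\cdot(v',B')=(v'+vB',BB')$ restricts to precisely the group law of the semidirect product $V\rtimes G$; the identity is $(0,I)$ and the inverse of $(v,B)$ is $(-vB^{-1},B^{-1})$, which indeed lies in $V\rtimes G$ by the stability facts just established. Hence $\Pert(M_N(\C))^\times\cong V\rtimes G$, as claimed.

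I do not anticipate a serious obstacle here; the only point requiring a little care is the equivalence ``$A$ invertible in $M_{N^2}(\C)$ $\iff$ $B$ invertible in $M_{N^2-1}(\C)$'' — one must be slightly careful that invertibility of an element of the \emph{semigroup} $\Pert(M_N(\C))$ means invertibility \emph{within that semigroup} (i.e. a two-sided inverse again lying in $\Pert$), but since $\Pert(M_N(\C))$ sits inside the associative unital algebra $M_{N^2}(\C)$ with the same unit, an element is invertible in the semigroup iff it is invertible in $M_{N^2}(\C)$ and its inverse happens to satisfy the defining conditions — and the explicit inverse $\left(\begin{smallmatrix}1&-vB^{-1}\\0&B^{-1}\end{smallmatrix}\right)$ manifestly does. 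So the argument is essentially bookkeeping on top of Proposition~\ref{prop:semidirect-prod-pert-MN}.
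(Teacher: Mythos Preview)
Your proposal is correct and follows essentially the same approach as the paper. The paper's proof simply invokes Proposition~\ref{prop:semidirect-prod-pert-MN} together with the general fact $(V\rtimes S)^\times = V\rtimes S^\times$ for a semigroup $S$ acting linearly on a vector space $V$; your argument amounts to a careful verification of this general fact in the case at hand (via the block-triangular form and the explicit inverse), together with the trivial identification $S^\times = G$.
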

\proof
This follows at once from Proposition \ref{prop:semidirect-prod-pert-MN} and the fact that 
$$(V \rtimes S)^{\times} = V \rtimes S^{\times}$$
which holds for any semigroup $S$ acting linearly on a vector space $V$. 
\endproof

As in the previous section, we show how the unitary group $U(N) = \U(M_N(\C))$ maps to this group of invertible elements. Again, there is a group homomorphism
\begin{equation}
\label{eq:group-homo-UN}
\begin{aligned}
U(N) &\to \Pert(M_N(\C))^\times\\
u& \mapsto u \otimes u^{*\circ}.
\end{aligned}
\end{equation}
The corresponding element $u \otimes \overline u \in M_N(\C) \otimes M_N(\C)$ defines the representation of $u \in U(N)$ on the tensor product $\C^N \otimes \overline{\C^N}$. Moreover, the block form that we found in \eqref{eq:block-pert-MN} parallels the decomposition of this representation into irreducible representations of $U(N)$:
$$
\C^N \otimes \overline{\C^N} \cong \C \oplus \C^{N^2-1}.
$$
Here $\C$ is the trivial representation space and $\C^{N^2-1}$ is the complexified adjoint representation space $\mathfrak{su}(N)_\C$ of $U(N)$.

\subsection{Perturbation semigroup of real matrix algebras}
Now that we have the semigroup $\Pert(M_N(\mathbb{C}))$ we consider the perturbation semigroup of the real matrix algebras $M_N(\mathbb{R})$ and $M_N(\mathbb{H})$. 
\subsubsection{$\mathcal{A}=M_N(\mathbb{R})$}
In order to determine the perturbation semigroup for $M_N(\mathbb{R})$ we can search the results we found for $\Pert(M_N(\mathbb{C}))$ for complex conjugation and subsequently ignore it. This means we get
$$\Pert(M_N(\mathbb{R})) \cong \Big \lbrace A \in M_{N^2} (\mathbb{R}) \mid A e_1 = e_1,\; \Omega A = A \Omega \Big \rbrace,$$
where
$$\Omega = \begin{pmatrix} 
I_{N(N+1)/2} & 0 \\
0 & -I_{N(N-1)/2} \\
\end{pmatrix}.$$
\begin{proposition}
\label{theo:verder.result.M_N(R)}
We have an isomorphism of semigroups:
$$\Pert(M_N(\mathbb{R})) \cong \Big(\mathbb{R}^{(N-1)(N+2)/2} \rtimes M_{(N-1)(N+2)/2}(\mathbb{R})\Big) \times M_{N(N-1)/2}(\mathbb{R}).$$
\end{proposition}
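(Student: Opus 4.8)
The plan is to start from the characterization
$$\Pert(M_N(\mathbb{R})) \cong \Big\{ A \in M_{N^2}(\mathbb{R}) \mid Ae_1 = e_1,\ \Omega A = A\Omega \Big\}$$
just established, and to decompose such an $A$ along the block structure dictated by $e_1$ and by the two eigenspaces of $\Omega$. Since $\Omega = \mathrm{diag}(I_{N(N+1)/2}, -I_{N(N-1)/2})$ and $e_1$ sits inside the $+1$-eigenspace of $\Omega$, the conditions $Ae_1 = e_1$ and $\Omega A = A\Omega$ together force $A$ to have the shape
$$A = \begin{pmatrix} 1 & v_1 & 0 \\ 0 & B_1 & 0 \\ 0 & 0 & B_4 \end{pmatrix},$$
where the first block is $1$-dimensional (spanned by $e_1$), the second is $(N(N+1)/2 - 1) = (N-1)(N+2)/2$-dimensional (the rest of the $+1$-eigenspace), and the third is $N(N-1)/2$-dimensional (the $-1$-eigenspace). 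Here $\Omega A = A\Omega$ (with a real, not conjugated, $A$) kills all off-diagonal blocks between the $\pm1$-eigenspaces, so $B_2 = B_3 = 0$, and also kills the lower-left entries of the first column; $Ae_1 = e_1$ then pins down the first column. So $v_1 \in \mathbb{R}^{(N-1)(N+2)/2}$ is an arbitrary row vector, $B_1 \in M_{(N-1)(N+2)/2}(\mathbb{R})$ is arbitrary, and $B_4 \in M_{N(N-1)/2}(\mathbb{R})$ is arbitrary.

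Next I would read off the semigroup law under matrix multiplication. Writing $A = (v_1, B_1, B_4)$ and $A' = (v_1', B_1', B_4')$ in this notation, a direct computation of $AA'$ gives $(v_1' + v_1 B_1',\ B_1 B_1',\ B_4 B_4')$. The $B_4$-component multiplies independently, giving the direct factor $M_{N(N-1)/2}(\mathbb{R})$, while the pair $(v_1, B_1)$ multiplies exactly by the semidirect-product rule $(v,B)(v',B') = (v' + vB', BB')$, which identifies that factor with $\mathbb{R}^{(N-1)(N+2)/2} \rtimes M_{(N-1)(N+2)/2}(\mathbb{R})$. Assembling these gives the claimed isomorphism. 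This parallels Proposition \ref{prop:semidirect-prod-pert-MN}, the only genuine change being that over $\mathbb{R}$ the intertwining relation is $\Omega A = A\Omega$ rather than $\Omega\overline A = A\Omega$, so the two $\pm1$-eigenblocks decouple completely instead of mixing real and imaginary parts.

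The main point requiring care—rather than a real obstacle—is justifying that off-diagonal blocks between the $+1$ and $-1$ eigenspaces of $\Omega$ must vanish and that $\Omega A = A\Omega$ imposes no further constraint within each eigenblock: this is the elementary fact that a real matrix commutes with $\mathrm{diag}(I_p, -I_q)$ if and only if it is block-diagonal of type $(p,q)$. Combined with $Ae_1 = e_1$ (which only constrains the first column, and is automatically consistent with block-diagonality since $e_1$ lies in the first eigenblock), this yields the three-block form above with no leftover conditions, and the semigroup structure then follows by the block matrix multiplication already carried out in Proposition \ref{prop:semidirect-prod-pert-MN}.
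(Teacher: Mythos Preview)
Your proposal is correct and follows essentially the same approach as the paper: both reduce to the block form
\[
A = \begin{pmatrix} 1 & v_1 & 0 \\ 0 & B_1 & 0 \\ 0 & 0 & B_2 \end{pmatrix}
\]
and read off the semigroup structure from it. The paper's own proof is in fact terser than yours---it simply asserts this general form and says the statement follows---whereas you spell out why $\Omega A = A\Omega$ forces block-diagonality and verify the multiplication law explicitly, which is a welcome clarification rather than a different route.
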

\proof
The conditions for a matrix $A \in M_{N^2}(\R)$ to be in $\Pert(M_N(\R)$ brings it in the following general form:
\begin{equation}
\label{eq:A-gen-form-MNR}
A= \begin{pmatrix}
1 & v_1 & 0 \\
0 & B_1 & 0 \\
0 & 0 & B_2 \\
\end{pmatrix}
\end{equation}
from which the proof of the statement follows. 
\endproof

\begin{corl}
The invertible elements of $\Pert(M_N(\mathbb{R}))$ are given by
$$\Pert(M_N(\mathbb{R}))^{\times} \cong \left( \mathbb{R}^{(N-1)(N+2)/2} \rtimes GL_{(N-1)(N+2)/2}(\mathbb{R}) \right) \times GL_{N(N-1)/2}(\mathbb{R})$$
\end{corl}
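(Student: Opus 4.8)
The plan is to derive the corollary directly from Proposition~\ref{theo:verder.result.M_N(R)} by passing to invertible elements on each factor of the semidirect-and-direct-product decomposition. Since $\Pert(M_N(\mathbb{R})) \cong \big(\mathbb{R}^{(N-1)(N+2)/2} \rtimes M_{(N-1)(N+2)/2}(\mathbb{R})\big) \times M_{N(N-1)/2}(\mathbb{R})$, an element is invertible in the semigroup precisely when it is invertible in each of the two direct-product factors. For the second factor this is immediate: the invertible elements of the multiplicative monoid $M_{N(N-1)/2}(\mathbb{R})$ are exactly $GL_{N(N-1)/2}(\mathbb{R})$.

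For the first factor, I would invoke the same elementary fact already used in the complex case, namely that for any monoid $S$ acting linearly on a vector space $V$ one has $(V \rtimes S)^{\times} = V \rtimes S^{\times}$. Concretely, the pair $(v,B)$ has a two-sided inverse in $V \rtimes M_{(N-1)(N+2)/2}(\mathbb{R})$ iff $B$ has a two-sided inverse in $M_{(N-1)(N+2)/2}(\mathbb{R})$, i.e.\ iff $B \in GL_{(N-1)(N+2)/2}(\mathbb{R})$; in that case the inverse of $(v,B)$ is $(-vB^{-1}, B^{-1})$ with respect to the semigroup law $(v,B)\cdot(v',B') = (v'+vB', BB')$. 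This identifies the invertible part of the first factor with $\mathbb{R}^{(N-1)(N+2)/2} \rtimes GL_{(N-1)(N+2)/2}(\mathbb{R})$. Combining the two factors gives the claimed description of $\Pert(M_N(\mathbb{R}))^{\times}$.

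Alternatively, and perhaps more transparently, one can argue from the matrix form \eqref{eq:A-gen-form-MNR}: a block upper-triangular matrix of that shape is invertible in $M_{N^2}(\mathbb{R})$ precisely when its diagonal blocks $1$, $B_1$, $B_2$ are invertible, i.e.\ when $B_1 \in GL_{(N-1)(N+2)/2}(\mathbb{R})$ and $B_2 \in GL_{N(N-1)/2}(\mathbb{R})$, and the inverse is again of the same form; translating back through the semidirect-product parametrization yields the stated group. There is essentially no obstacle here — the only point requiring a moment's care is checking that the semidirect-product inverse $(-vB^{-1},B^{-1})$ indeed lies in the correct space (which is automatic since $V$ is a full vector space with no extra $\Omega$-constraint in the real case) and that invertibility of the ambient $N^2 \times N^2$ matrix is equivalent to blockwise invertibility, which is standard.
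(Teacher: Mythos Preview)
Your proof is correct and essentially matches the paper's. The paper's proof is precisely your ``alternative'' argument: from the block-triangular form \eqref{eq:A-gen-form-MNR} one has $\det(A)=\det(B_1)\det(B_2)$, so $A$ is invertible iff $B_1$ and $B_2$ are. Your primary route via $(V\rtimes S)^\times = V\rtimes S^\times$ is also fine and is exactly the device the paper invokes in the complex case.
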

\proof
From the general form of $A$ in Equation \eqref{eq:A-gen-form-MNR} it follows that $\det(A) = \det(B_1)\det(B_2)$. Hence $A$ is invertible if and only if both $B_1$ and $B_2$ are invertible. 
\endproof

Again, let us consider the map from the unitary group $\U(M_N(\R))$ to $\Pert(M_N(\R))$. Actually, $\U(M_N(\R)) \cong O(N)$ and we have a map
\begin{align*}
O(N) &\to \Pert(M_N(\R))\\
u &\to u \otimes (u^{\sf T})^\circ.
\end{align*}
Upon identifying $M_N(\R)^\circ$ with $M_N(\R)$, the element $u \otimes (u^{\sf T})^\circ$ corresponds to the element $u \otimes u \in M_N(\R) \otimes M_N(\R)$. Hence, this defines a representation on the tensor product $\C^N \otimes \C^N$ of two copies of (the complexification of) the defining representation of $O(N)$. As opposed to the unitary groups encountered earlier, this tensor product has the following decomposition as $O(N)$-representations:
$$
\C^N \otimes \C^N \cong \C \oplus \C^{(N+2)(N-1)/2} \oplus \C^{N(N-1)/2}.
$$
The first summand is the trivial representation space of $O(N)$ (spanned by the vector $\sum_i e_i \otimes e_i$), the second consists of the symmetric tensors (spanned by the vectors of the form $e_i \otimes e_j + e_j \otimes e_i$) and the third consists of the skew-symmetric tensors (spanned by the vectors $e_i \otimes e_j - e_j \otimes e_i$). This gives rise to the dimensions $1, (N+2)(N-1)/2$ and $N(N-1)/2$ in the above decomposition. Moreover, this decomposition agrees with the block matrix form of $A$ in Equation \eqref{eq:A-gen-form-MNR}.

\subsubsection{$\mathcal{A}=\mathbb{H}$}
\label{sect:H}
We determine the perturbation semigroup of the quaternions $\mathbb{H}$. A convenient characterization of $\H$ is as the following set of $2 \times 2$ matrices:
$$\mathbb{H}= \Bigg\lbrace \begin{pmatrix}
\alpha & \beta\\
-\overline{\beta} & \overline{\alpha}\\
\end{pmatrix}
\mid \alpha,\beta \in \mathbb{C} \Bigg\rbrace.$$
Equivalently, for $A \in M_2(\mathbb{C})$ to be in $\mathbb{H}$ it should satisfy the condition $\widehat{J}\overline{A} = A \widehat{J}$ where
$$\widehat{J} = \begin{pmatrix}
0 & 1 \\
-1 & 0 \\
\end{pmatrix} = e_{12} - e_{21}.$$
The quaternions thus form a real subalgebra of $M_2(\mathbb{C})$ and in order to determine $\Pert(\H)$ we can start by looking at the matrices that form $\Pert(M_2(\mathbb{C}))$. Recall that the general form of elements therein was given by
$$A = \begin{pmatrix}
x_1				& z_2	& \overline{z_2}	& 1-x_1				\\
z_1				& z_4	& \overline{z_5}	& -z_1				\\
\overline{z_1}	& z_5	& \overline{z_4}	& -\overline{z_1}	\\
x_2				& z_3	& \overline{z_3}	& 1-x_2				\\
\end{pmatrix},$$
where $z_i \in \mathbb{C}\text{, for } i=1,\hdots,5, \; x_1,x_2 \in \mathbb{R}$.
We impose the commutation relation with $\widehat{J}$ in order to get a matrix in $\Pert(\mathbb{H})$. In fact, $\widehat{J}$ extends to the tensor product $M_2(\mathbb{C})\otimes M_2(\mathbb{C})^{\circ}$: for $A \otimes B^{\circ} \in M_2(\mathbb{C})\otimes M_2(\mathbb{C})^{\circ}$ to be in $\mathbb{H}\otimes \mathbb{H}^{\circ}$ we need
$$(\widehat{J}\otimes \widehat{J}^{\circ})\overline{(A \otimes B^{\circ})} = (\widehat{J}\otimes \widehat{J}^{\circ})(\overline{A} \otimes \overline{B}^{\circ}) = (A \otimes B^{\circ})(\widehat{J}\otimes \widehat{J}^{\circ}).$$
Once again using the identification 
\begin{align*}
M_2(\mathbb{C}) \otimes M_2(\mathbb{C})^{\circ} & \rightarrow M_4(\mathbb{C}), \qquad e_{ij} \otimes e_{kl}^{\circ} \mapsto e_{ij} \otimes e_{lk},
\end{align*}
we find that
\begin{equation}
\label{eq:quatJ}
\widehat{J} \otimes \widehat{J}^{\circ} \mapsto \widetilde{J} = (e_{12} - e_{21}) \otimes (e_{12} - e_{21})^{\sf T} = \begin{pmatrix}
0 & 0 & 0 & -1 \\
0 & 0 & 1 & 0 \\
0 & 1 & 0 & 0 \\
-1 & 0 & 0 & 0 \\
\end{pmatrix}.
\end{equation}
So for a matrix $A\in \Pert(\mathbb{H})$ we need to have $\widetilde{J}\overline{A} = A\widetilde{J}.$
In other words, for the matrix $A \in \Pert(M_2(\C))$ to be in $\Pert(\H)$ it should be of the form
$$A = \begin{pmatrix}
x & z_2	& \overline{z_2} & 1-x \\
z_1 & z_3 & z_4 & -z_1 \\
\overline{z_1} & \overline{z_4} & \overline{z_3} & -\overline{z_1} \\
1-x & -z_2 & -\overline{z_2} & x  \\
\end{pmatrix},$$
where $x \in \mathbb{R}, z_1,z_2,z_3,z_4 \in \mathbb{C}$. Since this is the same form as for $A \in \Pert(M_2(\mathbb{C}))$ it follows that we have a similar commutation relation for this $A$ with $\widehat{\Omega}$, namely $\widehat{\Omega} \overline{A} = A \widehat{\Omega}$. 

As in the case of $\Pert(M_2(\C))$ we can diagonalize $\widehat{\Omega}$ to 
$$
\Omega = \begin{pmatrix}
I_3 & 0 \\
0 & -1 \\
\end{pmatrix}
$$
where the new basis consists of eigenvectors, given by
$$e_1 \otimes e_1 \pm e_2 \otimes e_2,$$
$$e_1 \otimes e_2 \pm e_2 \otimes e_1.$$
We also write $\widetilde{J} = \widehat{J} \otimes \widehat{J}^{\circ}$ in terms of this new basis. Since 
\begin{gather*}
\big( (e_{12} - e_{21}) \otimes (e_{21}-e_{12})\big) (e_1 \otimes e_1 \pm e_2 \otimes e_2)  \\
 \qquad= \mp (e_1 \otimes e_1 \pm e_2 \otimes e_2),\\
\big( (e_{12} - e_{21}) \otimes (e_{21}-e_{12})\big) (e_1 \otimes e_2 \pm e_2 \otimes e_1) 
\\ \qquad= \pm (e_1 \otimes e_2 \pm e_2 \otimes e_1),
\end{gather*}
we retrieve the following expression for $\widetilde{J}$ in terms of the new basis
$$J = \begin{pmatrix}
-1 & 0 & 0 & 0 \\
0 & 1 & 0 & 0 \\
0 & 0 & 1 & 0 \\
0 & 0 & 0 & -1 \\
\end{pmatrix}.$$
With this we can find the general expression for $\Pert(\mathbb{H})$.
\begin{proposition}
In the above notation,
$$\Pert(\mathbb{H}) \cong \big\lbrace A \in M_4(\mathbb{C}) \mid Ae_1 = e_1, \Omega \overline{A} = A \Omega, J \overline{A} = A J \big \rbrace.$$
\end{proposition}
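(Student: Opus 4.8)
The plan is to view $\Pert(\H)$ inside the matrix algebra $M_2(\C)\otimes M_2(\C)^\circ\cong M_4(\C)$ that was already used for $\Pert(M_2(\C))$, to identify it there as the intersection of $\Pert(M_2(\C))$ with the subalgebra $\H\otimes_\R\H^\circ$, and then to read off the three matrix conditions directly from the descriptions established above.

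First I would make precise the identity
$$
\Pert(\H)=\Pert(M_2(\C))\cap\big(\H\otimes_\R\H^\circ\big)
$$
inside $M_2(\C)\otimes M_2(\C)^\circ$. The realization $\H\hookrightarrow M_2(\C)$ as a real $*$-subalgebra (sharing the unit, with quaternionic conjugation being the restriction of hermitian conjugation) induces a unital homomorphism $\H\otimes_\R\H^\circ\to M_2(\C)\otimes_\C M_2(\C)^\circ$, which is injective because its source is a simple algebra (it is $\cong M_4(\R)$). The normalization map $\sum a_j\otimes b_j^\circ\mapsto\sum a_jb_j$ and the self-adjointness involution $\sum a_j\otimes b_j^\circ\mapsto\sum b_j^*\otimes a_j^{*\circ}$ are intrinsic and restrict from $M_2(\C)\otimes M_2(\C)^\circ$ to the corresponding maps on $\H\otimes_\R\H^\circ$; hence, identifying $\H\otimes_\R\H^\circ$ with its image in $M_4(\C)$, an element of that image satisfies the defining conditions of $\Pert(\H)$ if and only if it satisfies those of $\Pert(M_2(\C))$. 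As $\H\otimes_\R\H^\circ$ is a subalgebra, the intersection is a subsemigroup and the induced map $\Pert(\H)\to M_4(\C)$ is an injective semigroup homomorphism.

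Next I would combine the two descriptions, working in the eigenbasis of $\widehat\Omega$ from Section \ref{subsec:M_2C}. There it was shown that, after the identification $e_{ij}\otimes e_{kl}^\circ\mapsto e_{ij}\otimes e_{lk}$ and the change to that eigenbasis, $\Pert(M_2(\C))\cong\{A\in M_4(\C)\mid Ae_1=e_1,\ \Omega\,\overline A=A\,\Omega\}$ with $e_1=(1,0,0,0)^{\sf T}$ and $\Omega=\mathrm{diag}(1,1,1,-1)$, and that $\widehat J\otimes\widehat J^\circ$ becomes $\widetilde J$, which in the eigenbasis reads $J=\mathrm{diag}(-1,1,1,-1)$. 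A computation with elementary tensors (carried out in the text) shows that $A\otimes B^\circ$ maps into $\H\otimes_\R\H^\circ$ precisely when $\widetilde J\,\overline A=A\,\widetilde J$; by $\R$-linearity the image of $\H\otimes_\R\H^\circ$ is therefore contained in $\{A\mid\widetilde J\,\overline A=A\,\widetilde J\}$, and in fact equals it, since the latter is the fixed-point set of the conjugate-linear involution $A\mapsto\widetilde J\,\overline A\,\widetilde J$ (note $\widetilde J^2=I$) and hence a real subspace of real dimension $16=\dim_\R(\H\otimes_\R\H^\circ)$. Intersecting the two descriptions gives
$$
\Pert(\H)\cong\big\{A\in M_4(\C)\mid Ae_1=e_1,\ \Omega\,\overline A=A\,\Omega,\ J\,\overline A=A\,J\big\},
$$
an isomorphism of semigroups because each of the three conditions is stable under matrix multiplication.

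The step I expect to need the most care is the identity $\Pert(\H)=\Pert(M_2(\C))\cap(\H\otimes_\R\H^\circ)$: checking that $\H\otimes_\R\H^\circ\hookrightarrow M_2(\C)\otimes_\C M_2(\C)^\circ$ is injective, that the normalization and self-adjointness conditions on $\H$ coincide with those of $M_2(\C)$ restricted to this subalgebra, and that the relation involving $\widetilde J$ cuts out the image \emph{exactly} and not merely contains it (the dimension count). The remaining ingredients — the explicit forms of $\Omega$ and $J$ and the closedness of the three conditions under matrix multiplication — are already in place from the preceding material.
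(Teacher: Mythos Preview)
Your proposal is correct and follows essentially the same route as the paper's proof: identify $\Pert(\H)$ as $\Pert(M_2(\C))\cap(\H\otimes_\R\H^\circ)$ inside $M_4(\C)$, then translate both pieces into the matrix conditions $Ae_1=e_1$, $\Omega\overline A=A\Omega$ and $J\overline A=AJ$. The paper's proof is a one-line sketch of exactly this; you have supplied the points it leaves implicit (injectivity of $\H\otimes_\R\H^\circ\hookrightarrow M_4(\C)$, compatibility of the normalization and self-adjointness conditions under restriction, and the dimension count showing the $\widetilde J$-relation cuts out precisely the image), which is useful.
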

\begin{proof}
The conditions $Ae_1 = e_1$ and $\Omega \overline{A} = A \Omega$ ensure that the matrix is in $\Pert(M_2(\mathbb{C}))$, while $J \overline{A} = A J $ ensures that such a matrix is in fact an element of $\mathbb{H}\otimes \mathbb{H}^{\circ}$. 
\end{proof}
 Since $\Omega$ and $J$ have the same commutation relation with $A \in \Pert(\mathbb{H})$,  also the sum and difference of $\Omega$ and $J$ must have this commutation relation with $A$. We define $\Upsilon = (\Omega - J)/2 = e_{11}$ and $\Gamma = (\Omega + J)/2 = e_{22} + e_{33} - e_{44}$, {\it i.e.}
\begin{align*}
\Upsilon &= (\Omega - J)/2 =
\begin{pmatrix}
1 & 0 & 0 & 0 \\
0 & 0 & 0 & 0 \\
0 & 0 & 0 & 0 \\
0 & 0 & 0 & 0 \\
\end{pmatrix},\\
\Gamma &= (\Omega + J)/2 = \begin{pmatrix}
0 & 0 & 0 & 0 \\
0 & 1 & 0 & 0 \\
0 & 0 & 1 & 0 \\
0 & 0 & 0 & -1 \\
\end{pmatrix}.
\end{align*}

 \begin{proposition}
We have 
 $$\Pert(\mathbb{H}) \cong \Big\lbrace A \in M_4(\mathbb{C}) \mid Ae_1 = e_1, \Upsilon \overline{A} = A \Upsilon, \Gamma \overline{A} = A \Gamma \Big \rbrace.$$
 \end{proposition}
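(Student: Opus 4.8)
The plan is to deduce this characterization directly from the previous proposition, which already gives
$$\Pert(\mathbb{H}) \cong \big\lbrace A \in M_4(\mathbb{C}) \mid Ae_1 = e_1,\ \Omega \overline{A} = A \Omega,\ J \overline{A} = A J \big \rbrace,$$
by showing that the pair of conditions $\{\Omega\overline A = A\Omega,\ J\overline A = AJ\}$ is equivalent to the pair $\{\Upsilon\overline A = A\Upsilon,\ \Gamma\overline A = A\Gamma\}$, since $e_1$ is untouched. The key observation is purely linear-algebraic: $\Upsilon = (\Omega - J)/2$ and $\Gamma = (\Omega + J)/2$, so conversely $\Omega = \Upsilon + \Gamma$ and $J = \Gamma - \Upsilon$. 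Thus the $\C$-linear span of $\{\Omega, J\}$ inside $M_4(\C)$ coincides with the $\C$-linear span of $\{\Upsilon, \Gamma\}$.

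First I would record the explicit identities $\Upsilon = (\Omega-J)/2$ and $\Gamma = (\Omega+J)/2$ (already displayed just above the statement), and their inverses $\Omega = \Upsilon + \Gamma$, $J = \Gamma - \Upsilon$. Then the argument is the following elementary remark: for a fixed matrix $A$, the set of matrices $M$ satisfying the "twisted commutation" relation $M\overline A = AM$ is closed under $\C$-linear combinations — indeed if $M_1\overline A = AM_1$ and $M_2 \overline A = AM_2$ then $(\lambda M_1 + \mu M_2)\overline A = A(\lambda M_1 + \mu M_2)$ for all $\lambda,\mu \in \C$, because the relation is linear in $M$ (no conjugation is applied to $M$). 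Hence $\Omega\overline A = A\Omega$ and $J\overline A = AJ$ together imply that every element of $\operatorname{span}_\C\{\Omega,J\} = \operatorname{span}_\C\{\Upsilon,\Gamma\}$ twist-commutes with $A$, in particular $\Upsilon$ and $\Gamma$ do; and the same argument run in the other direction gives the converse. Therefore the two sets of matrices described are literally equal, which proves the isomorphism.

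The only point that deserves a word of care — and the closest thing to an obstacle — is the observation that the twisted relation $M\overline A = AM$ really is $\C$-linear in $M$: one must notice that the complex conjugation falls on $A$ and not on $M$, so scaling $M$ by a complex scalar $\lambda$ scales both sides by $\lambda$ and linearity is preserved. (Had the relation involved $\overline M$ it would only be $\R$-linear, and since $\Upsilon,\Gamma$ are obtained from $\Omega,J$ with real coefficients $\pm 1/2$ the conclusion would still go through, so even this subtlety is harmless here.) Beyond that the proof is a two-line substitution.

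\begin{proof}
By the previous proposition it suffices to show that, for $A \in M_4(\C)$, the conditions $\Omega\overline A = A\Omega$ and $J\overline A = AJ$ hold simultaneously if and only if $\Upsilon\overline A = A\Upsilon$ and $\Gamma\overline A = A\Gamma$ hold simultaneously. For a fixed $A$, the relation $M\overline A = AM$ is $\C$-linear in $M$: if $M_1\overline A = AM_1$ and $M_2\overline A = AM_2$, then $(\lambda M_1 + \mu M_2)\overline A = A(\lambda M_1 + \mu M_2)$ for all $\lambda,\mu \in \C$. Now $\Upsilon = (\Omega - J)/2$ and $\Gamma = (\Omega + J)/2$, so if $\Omega$ and $J$ both satisfy the relation with $A$, then so do $\Upsilon$ and $\Gamma$. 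Conversely $\Omega = \Upsilon + \Gamma$ and $J = \Gamma - \Upsilon$, so if $\Upsilon$ and $\Gamma$ both satisfy the relation with $A$, then so do $\Omega$ and $J$. Since the condition $Ae_1 = e_1$ is common to both descriptions, the two sets coincide, and the claimed isomorphism follows.
\end{proof}
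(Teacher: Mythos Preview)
Your proof is correct and follows the same approach as the paper. The paper's argument is just the one sentence preceding the proposition: ``Since $\Omega$ and $J$ have the same commutation relation with $A \in \Pert(\mathbb{H})$, also the sum and difference of $\Omega$ and $J$ must have this commutation relation with $A$,'' after which the proposition is stated without further proof. You have made this precise by observing the linearity of $M \mapsto M\overline{A} - AM$ and, importantly, by writing down the inverse relations $\Omega = \Upsilon + \Gamma$, $J = \Gamma - \Upsilon$ to obtain the converse inclusion, which the paper leaves implicit.
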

This readily leads to the following explicit characterization of elements in $\Pert(\H)$:
\begin{equation}
\label{eq:block-matrix-H}
A = \begin{pmatrix}
1 & 0 & 0 & 0 \\
0 & x_1 & x_2 & iy_1 \\
0 & x_3 & x_4 & iy_2 \\
0 & iy_3 & iy_4 & x_5 \\
\end{pmatrix},
\end{equation}
where $x_1,\hdots,x_5,y_1,\hdots,y_4 \in \mathbb{R}$. 
The form of the $3 \times 3$ block matrix is dictated by the following matrix $\Gamma'$:
$$
\Gamma'= \begin{pmatrix}
1 & 0 & 0 \\
0 & 1 & 0 \\
0 & 0 & -1 \\
\end{pmatrix}.
$$
\begin{proposition}
The perturbation semigroup for $\mathbb{H}$ is given by
$$
\Pert(\mathbb{H}) \cong \lbrace A \in M_3(\mathbb{C}) \mid \Gamma'\overline{A} = A \Gamma' \rbrace.
$$
\end{proposition}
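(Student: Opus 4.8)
The plan is to start from the characterization of $\Pert(\mathbb{H})$ obtained in the preceding proposition, namely as the set of $A \in M_4(\mathbb{C})$ with $Ae_1 = e_1$, $\Upsilon\overline{A} = A\Upsilon$ and $\Gamma\overline{A} = A\Gamma$, and to show that these conditions effectively confine all the information to a $3\times 3$ block on which only the single relation $\Gamma'\overline{A} = A\Gamma'$ survives. Concretely, I would first observe that $\Upsilon = e_{11}$ so the relation $\Upsilon\overline{A} = A\Upsilon$ says that $\overline{A}$ and $A$ agree after projecting onto the first coordinate in the appropriate way; combined with $Ae_1 = e_1$ this forces $A$ to have the block form $\begin{pmatrix} 1 & 0 \\ 0 & B \end{pmatrix}$ with $B \in M_3(\mathbb{C})$. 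This is exactly the reduction already displayed in \eqref{eq:block-matrix-H}, so I can cite that computation: the first row and first column are pinned down, leaving a free $3\times 3$ matrix $B$.

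Next I would feed the remaining condition $\Gamma\overline{A} = A\Gamma$ into this block form. Since $\Gamma = e_{22} + e_{33} - e_{44}$ acts as the zero operator on the first coordinate and as $\Gamma' = \mathrm{diag}(1,1,-1)$ on the complementary $3$-dimensional subspace, the relation $\Gamma\overline{A} = A\Gamma$ restricted to the last three coordinates becomes precisely $\Gamma'\overline{B} = B\Gamma'$, while on the first coordinate it is vacuous (both sides vanish). Thus $A \in \Pert(\mathbb{H})$ if and only if $A = \begin{pmatrix} 1 & 0 \\ 0 & B\end{pmatrix}$ with $\Gamma'\overline{B} = B\Gamma'$, and the assignment $A \mapsto B$ is a bijection onto $\{B \in M_3(\mathbb{C}) \mid \Gamma'\overline{B} = B\Gamma'\}$. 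I would then check that this bijection is a semigroup homomorphism: multiplying two block-diagonal matrices $\begin{pmatrix}1 & 0\\ 0 & B\end{pmatrix}\begin{pmatrix}1 & 0 \\ 0 & B'\end{pmatrix} = \begin{pmatrix}1 & 0 \\ 0 & BB'\end{pmatrix}$ corresponds to $B \mapsto BB'$, so the semigroup structure (matrix multiplication on $M_{N^2}(\mathbb{C})$ restricted to $\Pert$) is transported faithfully to matrix multiplication on $M_3(\mathbb{C})$.

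I do not expect a genuine obstacle here — the statement is essentially a bookkeeping consolidation of the two preceding propositions. The only point requiring mild care is making explicit that $\Gamma$ and $\Upsilon$ together are equivalent to $\Omega$ and $J$ (this was already noted: $\Omega = \Upsilon + \Gamma$, $J = \Gamma - \Upsilon$), so that no information is lost when passing from the $\{Ae_1=e_1, \Omega\overline{A}=A\Omega, J\overline{A}=AJ\}$ description to the $\{Ae_1 = e_1, \Upsilon\overline{A}=A\Upsilon, \Gamma\overline{A}=A\Gamma\}$ one, and finally to the $3\times 3$ description. One should also note that $\{B \in M_3(\mathbb{C}) \mid \Gamma'\overline{B} = B\Gamma'\}$ is closed under multiplication, which follows from $\Gamma'\overline{BB'} = \Gamma'\overline{B}\,\overline{B'} = B\Gamma'\overline{B'} = BB'\Gamma'$, confirming that the right-hand side is indeed a semigroup. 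With these remarks the isomorphism is immediate.
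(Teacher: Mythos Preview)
Your proposal is correct and follows essentially the same approach as the paper: the paper derives the block form \eqref{eq:block-matrix-H} from the conditions $Ae_1=e_1$, $\Upsilon\overline{A}=A\Upsilon$, $\Gamma\overline{A}=A\Gamma$, then observes that the surviving $3\times 3$ block is governed by $\Gamma'$, leaving the semigroup isomorphism as an immediate consequence. Your write-up is actually more explicit than the paper's, which essentially treats the proposition as a direct restatement of the preceding computation and does not spell out the homomorphism check or the closure of the right-hand side under multiplication.
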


\begin{corollary}
The group of invertible elements in $\Pert(\H)$ is
$$\Pert(\mathbb{H})^{\times} \cong \lbrace A \in GL_3(\mathbb{C}) \mid \Gamma' \overline{A} = A \Gamma' \rbrace.$$
\end{corollary}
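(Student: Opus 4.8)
The plan is to read off the corollary directly from the immediately preceding proposition, which identifies $\Pert(\mathbb{H})$ with the multiplicative semigroup $\{A \in M_3(\mathbb{C}) \mid \Gamma'\overline{A} = A\Gamma'\}$. Since the bijection $\Pert(\mathbb{H}) \cong \{A \in M_3(\mathbb{C}) \mid \Gamma'\overline{A} = A\Gamma'\}$ is an isomorphism of semigroups (the semigroup law on both sides being matrix multiplication), an element of $\Pert(\mathbb{H})$ is invertible in the semigroup precisely when the corresponding $3\times 3$ matrix $A$ is invertible \emph{and} its inverse again lies in the set, i.e.\ again satisfies the commutation relation with $\Gamma'$. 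So the whole content of the corollary is the observation that $\{A \in M_3(\mathbb{C}) \mid \Gamma'\overline{A} = A\Gamma'\}$ is closed under taking inverses whenever they exist in $GL_3(\mathbb{C})$.

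First I would note that $\Gamma'$ is its own inverse: $(\Gamma')^2 = I_3$, since $\Gamma' = \mathrm{diag}(1,1,-1)$. Then, starting from $\Gamma'\overline{A} = A\Gamma'$ for an invertible $A$, I would multiply on the left by $A^{-1}$ and on the right by $(\Gamma')^{-1} = \Gamma'$, obtaining $A^{-1}\Gamma'\overline{A}\,\Gamma' = \Gamma'$, hence $A^{-1}\Gamma' = \Gamma'(\overline{A})^{-1}$ after using $(\overline{A})^{-1}\Gamma' $ on the right... more cleanly: from $\Gamma'\overline{A} = A\Gamma'$ take inverses of both sides to get $(\overline{A})^{-1}\Gamma' = \Gamma'A^{-1}$ (using $(\Gamma')^{-1} = \Gamma'$), and then observe $(\overline{A})^{-1} = \overline{A^{-1}}$, so $\overline{A^{-1}}\,\Gamma' = \Gamma' A^{-1}$, which is exactly the condition that $A^{-1}$ lies in the set, after rearranging to $\Gamma'\,\overline{A^{-1}} = A^{-1}\Gamma'$. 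Conversely, any $A \in GL_3(\mathbb{C})$ satisfying $\Gamma'\overline{A} = A\Gamma'$ has a two-sided inverse in the set by the same computation, so it is invertible as an element of the semigroup. This establishes
$$\Pert(\mathbb{H})^{\times} \cong \{A \in GL_3(\mathbb{C}) \mid \Gamma'\overline{A} = A\Gamma'\}.$$

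There is essentially no obstacle here: the argument is a two-line manipulation of the defining relation together with the elementary facts $(\Gamma')^2 = I_3$ and $\overline{A^{-1}} = (\overline{A})^{-1}$. If anything requires a word of care, it is only the remark that the set $\{A \in M_3(\mathbb{C}) \mid \Gamma'\overline{A} = A\Gamma'\}$ really is a subsemigroup of $M_3(\mathbb{C})$ under multiplication (so that "invertible element of the semigroup" coincides with "invertible matrix whose inverse stays in the set"), but this closure under products is already implicit in the preceding proposition identifying it with $\Pert(\mathbb{H})$, which is a semigroup. Hence the corollary follows at once.
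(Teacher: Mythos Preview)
Your argument is correct and is exactly what the paper intends: the corollary is stated there without proof, as an immediate consequence of the preceding proposition identifying $\Pert(\mathbb{H})$ with $\{A \in M_3(\mathbb{C}) \mid \Gamma'\overline{A} = A\Gamma'\}$. Your verification that the defining relation is preserved under inversion (using $(\Gamma')^2 = I_3$ and $\overline{A^{-1}} = (\overline{A})^{-1}$) supplies precisely the missing line; note that your ``rearranging'' step from $\overline{A^{-1}}\,\Gamma' = \Gamma' A^{-1}$ to $\Gamma'\,\overline{A^{-1}} = A^{-1}\Gamma'$ is justified by conjugating both sides with $\Gamma'$ and using $(\Gamma')^2 = I_3$, which you already recorded.
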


The unitary group $\U(\H)$ is $SU(2)$, so that there is a map
$$
SU(2) \to \Pert(\H)
$$
similar to the map $U(2) \to \Pert(M_2(\C))$ in Section \ref{subsec:M_2C}. Again, the block form of $A$ in Equation \eqref{eq:block-matrix-H} corresponds to the decomposition of the representation of $SU(2)$ on $\C^2 \otimes \overline{\C^2}$ into irreducible summands $\C$ and $\C^3 \cong \mathfrak{su}(2)_\C$.

\subsubsection{$\mathcal{A}=M_N(\mathbb{H})$}
Finally, we determine the perturbation semigroup for $M_N(\mathbb{H})$. A matrix in $\Pert(M_N(\mathbb{H}))$ is characterized by a matrix similar to $\widetilde{J}$ that we had for $\Pert(\mathbb{H})$. In fact, we have the following 
\begin{lemma}
The perturbation semigroup of $\Pert(M_N(\mathbb{H}))$ can be obtained from $\Pert(M_{2N}(\mathbb{C}))$ as follows:
$$
\Pert(M_N(\H)) \cong \left\{ A \in \Pert(M_{2N}(\C)):  \widetilde{L}\overline{A} = A \widetilde{L} \right\}
$$
with $\widetilde{L}= I_{N^2} \otimes \widetilde{J}$.
\end{lemma}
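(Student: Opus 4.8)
The plan is to realize $M_N(\H)$ as a unital real $*$-subalgebra of $M_{2N}(\C)$ and then check that forming the perturbation semigroup commutes with this inclusion, at the cost of the extra commutation relation with $\widetilde L$. Writing $\C^{2N}=\C^N\otimes\C^2$ and letting $\widehat L=I_N\otimes\widehat J$ act on the $\C^2$-factor, an element $A\in M_{2N}(\C)$, viewed as an $N\times N$ array of $2\times2$ blocks, satisfies $\widehat L\,\overline A=A\,\widehat L$ if and only if each block is a quaternion, so
$$
M_N(\H)\cong\bigl\{A\in M_{2N}(\C)\mid \widehat L\,\overline A=A\,\widehat L\bigr\},
$$
the block version of the description of $\H$ in Section~\ref{sect:H}. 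Note $\widehat L^2=I_N\otimes\widehat J^2=-I_{2N}$, that $1_{M_N(\H)}=I_{2N}$, and that $M_N(\H)$ is stable under the conjugate transpose of $M_{2N}(\C)$ (blockwise, quaternionic conjugation is the $2\times2$ conjugate transpose); hence $M_N(\H)$ is a unital $*$-subalgebra.

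\emph{Step 1: the quaternionic structure tensors up.} The claim is that inside $M_{2N}(\C)\otimes M_{2N}(\C)^\circ$ one has
$$
M_N(\H)\otimes_\R M_N(\H)^\circ=\bigl\{X\mid (\widehat L\otimes\widehat L^\circ)\,\overline X=X\,(\widehat L\otimes\widehat L^\circ)\bigr\}.
$$
Since $\widehat L^2=-I_{2N}$ (and likewise in the opposite algebra), $(\widehat L\otimes\widehat L^\circ)^2$ is the unit, so $X\mapsto(\widehat L\otimes\widehat L^\circ)\,\overline X\,(\widehat L\otimes\widehat L^\circ)$ is an antilinear involution and its fixed-point set is a real form of $M_{2N}(\C)\otimes M_{2N}(\C)^\circ$. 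On a simple tensor $A\otimes B^\circ$ with $A,B\in M_N(\H)$ this involution acts trivially: $\widehat L\,\overline A\,\widehat L=A\,\widehat L^2=-A$ and, in the opposite algebra, $\widehat L^\circ\,\overline{B}^\circ\,\widehat L^\circ=-B^\circ$, and the two signs cancel. As these simple tensors $\R$-span $M_N(\H)\otimes_\R M_N(\H)^\circ$, itself a real form of $M_{2N}(\C)\otimes M_{2N}(\C)^\circ$, the two real forms coincide. (For $N=1$ this is precisely the computation already done in Section~\ref{sect:H}.)

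\emph{Step 2: pass to matrices.} Under the identification \eqref{ident:genmat} of $M_{2N}(\C)\otimes M_{2N}(\C)^\circ$ with $M_{(2N)^2}(\C)$, which sends the opposite product to transposition, $\widehat L\otimes\widehat L^\circ$ goes to $\widehat L\otimes\widehat L^{\sf T}=(I_N\otimes\widehat J)\otimes(I_N\otimes\widehat J^{\sf T})$; reorganizing the legs $\C^N\otimes\C^2\otimes\C^N\otimes\C^2$ as $\C^{N^2}\otimes\C^4$ (interchanging the middle two factors) turns this into $I_{N^2}\otimes(\widehat J\otimes\widehat J^{\sf T})=I_{N^2}\otimes\widetilde J=\widetilde L$, with $\widetilde J$ as in \eqref{eq:quatJ}. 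Finally, because $M_N(\H)\hookrightarrow M_{2N}(\C)$ is a unital $*$-subalgebra, the normalization and self-adjointness conditions are inherited verbatim, so $\Pert(M_N(\H))=\Pert(M_{2N}(\C))\cap\bigl(M_N(\H)\otimes_\R M_N(\H)^\circ\bigr)$; together with Steps 1 and 2 this gives $\Pert(M_N(\H))\cong\{A\in\Pert(M_{2N}(\C))\mid \widetilde L\,\overline A=A\,\widetilde L\}$.

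The main obstacle is Step 1: a priori the single antilinear relation with $\widehat L\otimes\widehat L^\circ$ is visibly satisfied only by (combinations of) simple tensors $A\otimes B^\circ$ with $A,B\in M_N(\H)$, and one must exclude that it defines a strictly larger real subspace. The real-form/dimension argument above — equivalently, the explicit basis bookkeeping carried out for $N=1$ in Section~\ref{sect:H} — is what closes this gap; the remaining points ($\widehat J^2=-I_2$, the effect of $\circ$ under \eqref{ident:genmat}, compatibility of the $*$-structures) are routine.
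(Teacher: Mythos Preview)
Your proof is correct and follows essentially the same strategy as the paper: realize $M_N(\H)$ inside $M_{2N}(\C)$ via the antilinear commutation relation with $\widehat L=I_N\otimes\widehat J$, tensor this up to $\widehat L\otimes\widehat L^\circ$, and translate to $\widetilde L=I_{N^2}\otimes\widetilde J$ under the matrix identification. The only difference is one of completeness: where the paper decomposes $A\in M_{4N^2}(\C)$ into $4\times4$ blocks $A_{ij}$ and asserts that $\widetilde J\,\overline{A_{ij}}=A_{ij}\,\widetilde J$ characterizes $\H\otimes_\R\H^\circ$ (tacitly invoking the $N=1$ computation from Section~\ref{sect:H}), you supply a clean global justification via the real-form/dimension argument, which actually closes the gap the paper leaves open at that step.
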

\begin{proof}
Let $A \in M_{4N^2}(\mathbb{C})$ and let $\widetilde{L} = I_{N^2} \otimes \widetilde{J}$. 
We write $A = \sum e_{ij} \otimes A_{ij}$ for $A_{ij} \in \H$ and $i,j=1,\ldots,N$. 
It is clear that $\widetilde{L} \overline{A} = A \widetilde{L}$ amounts to imposing $\widetilde{J} \overline{A_{ij}} = A_{ij} \widetilde{J}$ for all $i,j$. In other words, this amounts to $A_{ij}$ to be in $\mathbb{H} \otimes \mathbb{H}^{\circ}$ hence completing the proof.
\end{proof}

We now want to simultaneously diagonalize $\widehat{\Omega}$ and $\widetilde{L}$, just as we did for $\mathbb{H}$. Note that $\widehat{\Omega} = \widehat{\Omega}_{N} \otimes \widehat{\Omega}_{2}$ in terms of the matrices of Equations \eqref{eqref:Omega} and \eqref{eq:omega-2} for $M_N(\C)$ and $M_2(\C)$, respectively. Hence, diagonalizing $\widehat{\Omega}_2$ and $\widetilde{J}$ as in Section \ref{sect:H}, we can write:
\begin{align*}
\Omega &= \begin{pmatrix} I_{N(N+1)/2} & 0 \\ 0 & -I_{N(N-1)/2} \end{pmatrix} \otimes \begin{pmatrix} 1 & 0 & 0 & 0 \\ 0 & 1 & 0 & 0 \\ 0 & 0 & 1 & 0 \\ 0 & 0 & 0 & -1\end{pmatrix},\\
L &= \begin{pmatrix} I_{N(N+1)/2} & 0 \\ 0 & I_{N(N-1)/2} \end{pmatrix} \otimes   \begin{pmatrix} -1 & 0 & 0 & 0 \\ 0 & 1 & 0 & 0 \\ 0 & 0 & 1 & 0 \\ 0 & 0 & 0 & -1\end{pmatrix}.
\intertext{These matrices are equivalent to the following diagonal matrices in $M_{4N^2}(\C)$:}
\Omega &= \begin{pmatrix}
I_{N^2} & 0 & 0 & 0 \\
0 & -I_{N(N-1)} & 0 & 0 \\
0 & 0 & -I_{N^2} & 0 \\ 
0 & 0 & 0 & I_{N(N+1)}  
\end{pmatrix},\\
L &= \begin{pmatrix}
-I_{N^2} & 0 & 0 & 0 \\
0 &  I_{N(N-1)} & 0 & 0 \\
0 & 0 & -I_{N^2} & 0 \\ 
0 & 0 & 0 & I_{N(N+1)}  
\end{pmatrix}.
\end{align*}
We thus get $$\Pert(M_N(\mathbb{H})) \cong \Big\lbrace A \in M_{4N^2}(\mathbb{C}) \mid Ae_1 = e_1, \Omega \overline{A} = A \Omega, L \overline{A} = A L \Big \rbrace.$$
As up to conjugation $A$ commutes with both $\Omega$ and $L$, every linear combination of the latter two must satisfy a similar commutation relation with $A$. We introduce block-diagonal matrices
\begin{align*} 
\Psi &= (\Omega - L)/2 = :
\begin{pmatrix} 1 & 0 & 0 \\ 0 & \Psi' & 0 \\ 0 & 0 & 0_{N(2N+1)} \end{pmatrix},
\intertext{and} 
\Theta &= (\Omega + L)/2 =: 
\begin{pmatrix} 0_{N(2N-1)} & 0 \\ 0 & \Theta' \end{pmatrix},
\end{align*}
where we have implicitly defined matrices $\Psi' \in M_{(2N+1)(N-1)}(\C)$ and $\Theta' \in M_{N(2N+1)}(\C)$ by
\begin{align*}
\Psi' = \begin{pmatrix}
I_{N^2-1} & 0 \\
0 & -I_{N(N-1)} \\
\end{pmatrix},
\qquad
\Theta' = \begin{pmatrix}
I_{N^2} & 0 \\
0 & -I_{N(2N-1)} \\
\end{pmatrix}.
\end{align*}
The reason for this particular block decomposition will become clear in the following proposition; first note that we have by linearity
$$\Pert(M_N(\mathbb{H})) \cong \Big\lbrace A \in M_{4N^2}(\mathbb{C}) \mid Ae_1 = e_1, \Psi \overline{A} = A \Psi, \Theta \overline{A} = A \Theta \Big \rbrace.$$
\begin{proposition}
We have
$$
\Pert(M_N(\mathbb{H})) \cong \left( V \rtimes S \right) \times T ,
$$
where
\begin{align*}
V &= \Big\lbrace v \in \C^{(2N+1)(N-1)} \mid \overline{v} = v \Psi'  \Big\rbrace,\\
S &= \Big\lbrace B \in M_{(2N+1)(N-1)}(\mathbb{C}) \mid \Psi' \overline{B} = B \Psi' \Big\rbrace,\\
T &= \Big\lbrace C \in M_{N(2N+1)} (\mathbb{C}) \mid \Theta' \overline{C} = C \Theta' \Big\rbrace.
\end{align*}
\end{proposition}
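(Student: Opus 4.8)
The strategy is to read everything off the characterization obtained just above,
$$\Pert(M_N(\mathbb{H})) \cong \Big\lbrace A \in M_{4N^2}(\mathbb{C}) \mid Ae_1 = e_1,\ \Psi \overline{A} = A \Psi,\ \Theta \overline{A} = A \Theta \Big\rbrace,$$
together with the explicit shapes of $\Psi$ and $\Theta$. Consider the decomposition $\mathbb{C}^{4N^2} = \mathbb{C}e_1 \oplus \mathbb{C}^{(2N+1)(N-1)} \oplus \mathbb{C}^{N(2N+1)}$ into three blocks, which we label I, II and III. By construction $\Psi = \operatorname{diag}(1, \Psi', 0_{N(2N+1)})$ is supported on blocks I and II, while $\Theta = \operatorname{diag}(0_{N(2N-1)}, \Theta')$ is supported on block III; moreover $\Psi'$ and $\Theta'$ are diagonal matrices with entries $\pm 1$, so $(\Psi')^2 = I$ and $(\Theta')^2 = I$ and in particular both are invertible. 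The crucial observation is that $\Psi$ vanishes exactly where $\Theta'$ is invertible and $\Theta$ vanishes exactly where $\Psi'$ is invertible; this is what will force the coupling between block III and blocks I, II to disappear, and hence what makes $\Pert(M_N(\mathbb{H}))$ split off a direct factor.

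Next I would write a general matrix $A \in M_{4N^2}(\mathbb{C})$ in $3 \times 3$ block form $A = (A_{pq})_{p,q=1}^{3}$ with respect to I, II, III and translate the three defining conditions entry by entry. The condition $Ae_1 = e_1$ gives $A_{11} = 1$ and $A_{21} = A_{31} = 0$. Comparing the nine blocks of $\Psi\overline{A} = A\Psi$, and using that $\Psi'$ is invertible, yields $\overline{A_{13}} = 0$, $\Psi'\overline{A_{23}} = 0$ and $A_{32}\Psi' = 0$ --- hence $A_{13} = A_{23} = A_{32} = 0$ --- together with $\overline{A_{12}} = A_{12}\Psi'$ and $\Psi'\overline{A_{22}} = A_{22}\Psi'$. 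Comparing the blocks of $\Theta\overline{A} = A\Theta$ produces nothing new beyond $\Theta'\overline{A_{33}} = A_{33}\Theta'$ (the remaining equations are either trivial or already implied, again using invertibility of $\Theta'$). Thus every $A$ in the set is forced to have the form
$$A = \begin{pmatrix} 1 & v & 0 \\ 0 & B & 0 \\ 0 & 0 & C \end{pmatrix}, \qquad v \in V,\ B \in S,\ C \in T,$$
and conversely any matrix of this shape lies in the set, since $Ae_1 = e_1$ is immediate and the defining relations of $V$, $S$ and $T$ are precisely the blockwise content of $\Psi\overline{A} = A\Psi$ and $\Theta\overline{A} = A\Theta$. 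This sets up a bijection $A \leftrightarrow ((v,B), C)$ onto the underlying set of $(V \rtimes S) \times T$.

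It then remains to check that this bijection respects multiplication. First one verifies that the target is a genuine semigroup: $S$ and $T$ are closed under multiplication (if $\Psi'\overline{B} = B\Psi'$ and $\Psi'\overline{B'} = B'\Psi'$ then $\Psi'\overline{BB'} = B\Psi'\overline{B'} = BB'\Psi'$, and similarly for $T$), and $S$ acts on $V$ by right multiplication since $\overline{vB} = \overline{v}\,\overline{B} = v\Psi'\overline{B} = vB\Psi'$ for $v \in V$ and $B \in S$, using $(\Psi')^2 = I$; so $V \rtimes S$ makes sense with the law $(v,B)(v',B') = (v' + vB', BB')$ recalled earlier. A direct block multiplication then gives
$$\begin{pmatrix} 1 & v & 0 \\ 0 & B & 0 \\ 0 & 0 & C \end{pmatrix}\begin{pmatrix} 1 & v' & 0 \\ 0 & B' & 0 \\ 0 & 0 & C' \end{pmatrix} = \begin{pmatrix} 1 & v' + vB' & 0 \\ 0 & BB' & 0 \\ 0 & 0 & CC' \end{pmatrix},$$
which corresponds under the bijection to $((v' + vB', BB'), CC') = ((v,B),C) \cdot ((v',B'),C')$; exactly as in the complex case (Proposition \ref{prop:semidirect-prod-pert-MN}) this exhibits $\Pert(M_N(\mathbb{H}))$ as $(V \rtimes S) \times T$.

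The only step requiring real care is the blockwise dissection of the twisted commutation relations, specifically the vanishing of the off-diagonal blocks $A_{13}, A_{23}, A_{31}, A_{32}$: this is where the particular block decomposition coming from $\Psi = (\Omega - L)/2$ and $\Theta = (\Omega + L)/2$ is essential, since it is precisely the disjointness of the supports of $\Psi$ and $\Theta$ --- each being invertible on the block where the other vanishes --- that produces the product decomposition. Everything else is routine bookkeeping with $2 \times 2$ and $3 \times 3$ block matrices.
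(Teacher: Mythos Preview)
Your proof is correct and follows essentially the same approach as the paper: write $A$ in $3\times 3$ block form adapted to the decomposition $\mathbb{C}^{4N^2}=\mathbb{C}e_1\oplus\mathbb{C}^{(2N+1)(N-1)}\oplus\mathbb{C}^{N(2N+1)}$, impose $Ae_1=e_1$ and the twisted commutation with $\Psi$ and $\Theta$, and read off the block-triangular form. Your write-up is in fact more explicit than the paper's own proof---you spell out why the off-diagonal blocks $A_{13},A_{23},A_{32}$ vanish (via invertibility of $\Psi'$ on its support), and you verify the semigroup structure on the target, both of which the paper leaves implicit. One harmless slip: in checking that $S$ acts on $V$, the step $v\Psi'\overline{B}=vB\Psi'$ uses only the defining relation $\Psi'\overline{B}=B\Psi'$ of $S$, not $(\Psi')^2=I$.
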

\begin{proof}
Let us start with a matrix $A \in M_{4N^2}(\C)$ that leaves $e_1$ invariant and write $A$ in the following suggestive form
$$
A = \begin{pmatrix}
1 & v & w \\
0 & B &B' \\
0 & C' & C
\end{pmatrix},$$
where $B \in M_{(2N+1)(N-1)}(\mathbb{C})$ and $C \in M_{N(2N+1)}(\mathbb{C})$ and the other block matrices $B',C'$ and the vectors $v,w$ chosen in a compatible way. Applying the commutation relation of $A$ with $\Psi$ gives 

$$A = \begin{pmatrix}
1 & v & 0 \\
0 & B & 0 \\
0 & 0 & C \\
\end{pmatrix},
$$
with $\overline{v} = v \Psi' $, $\Psi' \overline{B} = B \Psi$ and $\Theta' \overline{C} = C \Theta'$. 
\end{proof}
Note that this result is in concordance with the perturbation semigroup that we have found previously for $\mathbb{H}$, with the semidirect product vanishing for $N=1$.

\medskip

Again we can trace the unitary elements $u$ in $M_N(\H)$ in $\Pert(M_N(\H))$ ({\it cf.} Proposition \ref{prop:UA-Pert}). Note that $\U(M_N(\H))$ can be identified with the Lie group $Sp(N) = Sp(2N,\C) \cap U(2N)$, which is of dimension $N(2N+1)$. Then, the above block-diagonal decomposition of $A \in \Pert(M_N(\H))$ corresponds to the decomposition of the tensor product representation $\C^{2N} \otimes \overline{\C^{2N}}$ of $Sp(N)$ into irreducible representations. We find, for example, the (complexification of the) adjoint representation on $\mathfrak{sp}(N)_\C$ via the lower-diagonal matrix $C \in GL_{N(2N+1)}(\C)$.

\end{document}